\newtheorem{satz}{Theorem}[section]
\newtheorem{cor}[satz]{Corollary}
\newtheorem{lem}[satz]{Lemma}
\newtheorem{defi}[satz]{Definition}
\newtheorem{bem}[satz]{Remark}
\newtheorem{bsp}[satz]{Example}
\numberwithin{equation}{section}
\newcommand{\erf}{\operatorname{erf}}
\newcommand{\erfc}{\operatorname{erfc}}
\renewcommand{\Re}{\operatorname{Re}}
\renewcommand{\Im}{\operatorname{Im}}
\newcommand{\Arg}{\operatorname{Arg}}
\newcommand{\ran}{\operatorname{ran}}
\newcommand{\sinc}{\operatorname{sinc}}
\title[Integral representation of superoscillations via complex Borel measures]{\bf Integral representation of superoscillations \\ via complex Borel measures and their convergence}
\author[Jussi Behrndt]{Jussi Behrndt}
\address{(JB) Institute of Applied Mathematics, Graz University of Technology, Steyrergasse 30, 8010 Graz, Austria}
\email{behrndt@tugraz.at}
\author[Fabrizio Colombo]{Fabrizio Colombo}
\address{(FC) Politecnico di Milano, Dipartimento di Matematica, Via E. Bonardi, 9, 20133 Milano, Italy}
\email{fabrizio.colombo@polimi.it}
\author[Peter Schlosser]{Peter Schlosser}
\address{(PS) Politecnico di Milano, Dipartimento di Matematica, Via E. Bonardi, 9, 20133 Milano, Italy}
\email{pschlosser@math.tugraz.at}
\author[Daniele C. Struppa]{Daniele C. Struppa}
\address{(DCS) The Donald Bren Presidential Chair in Mathematics, Chapman University, Orange, USA}
\email{struppa@chapman.edu}
\begin{document}

\begin{abstract}
In the last decade there has been a growing interest in superoscillations in various fields of mathematics, physics and engineering. However, while in applications as optics the local oscillatory behaviour is the important property, some convergence to a plane wave is the standard characterizing feature of a superoscillating function in mathematics and quantum mechanics. Also there exists a certain discrepancy between the representation of superoscillations either as generalized Fourier series, as certain integrals or via special functions. The aim of this work is to close these gaps and give a general definition of superoscillations, covering the well-known examples in the existing literature. Superoscillations will be defined as sequences of holomorphic functions, which admit integral representations with respect to complex Borel measures and converge to a plane wave in the space $\mathcal{A}_1(\mathbb{C})$ of exponentially bounded entire functions.
\end{abstract}

\maketitle

\noindent AMS Classification 35A20, 35A08.

\noindent Keywords: Superoscillations, Borel measures, infinite order differential operators.

\section{Introduction}

The theory of superoscillatory functions has its origin in various areas of physics and engineering. For example in quantum mechanics these functions appear in the context of Aharonov's weak values \cite{AAV88}, in antenna theory they were first used in \cite{TDFG} and for optics and other applications we refer to the {\it Roadmap on Superoscillations} by  M.V. Berry et al. \cite{Be19}.

\medskip

The prototypical superoscillatory function that appeared  in quantum mechanics can be found in the still unpublished preprint \cite{APR91} from 1991, where Y. Aharonov and collaborators made a thought experiment considering a box  containing only red light, but emitting a gamma ray. With the notion of weak values the authors also found a way \textit{How the result of a measurement of a component of the spin of a spin-$\frac{1}{2}$ particle can turn out to be 100}, see \cite{AAV88}. The type of functions considered in those papers are of the form
\begin{equation}\label{Eq_Example}
F_n(x)=\sum\limits_{j=0}^nC_j(n)e^{ik_j(n)x},\quad x\in\mathbb{R},\,n\in\mathbb{N},
\end{equation}
where for some fixed $a>1$ the coefficients are chosen as
\begin{equation}\label{Eq_Example_coefficients}
C_j(n)=\Big(\begin{matrix} n \\ j \end{matrix}\Big)\Big(\frac{1+a}{2}\Big)^{n-j}\Big(\frac{1-a}{2}\Big)^j\quad\text{and}\quad k_j(n)=1-\frac{2j}{n}.
\end{equation}
Note that every $F_n$ is a linear combination of plane waves with frequencies $k_j(n)\in[-1,1]$. The superoscillatory behaviour now comes from the fact that
\begin{equation}\label{Eq_Example_convergence}
\lim\limits_{n\rightarrow\infty}F_n(x)=e^{iax},\quad x\in\mathbb{R},
\end{equation}
converges to a plane wave with frequency $a>1$. However, superoscillations of the form \eqref{Eq_Example} do 
not only appear with the frequencies $k_j(n)$ in \eqref{Eq_Example_coefficients}, but it was shown in \cite{ACSSST21} that for any choice of frequencies $k_j(n)\in[-1,1]$ and any $a>1$ one can construct coefficients $C'_j(n)\in\mathbb{C}$, such that
\begin{equation}\label{Eq_Arbitrary_frequencies}
\sum\limits_{j=0}^nC_j'(n)e^{ik_j(n)x}\sim e^{iax},\quad x\in\mathbb{R},
\end{equation}
where $\sim$ is understood in the sense that the Taylor coefficients of both sides coincide up to order $n$.

\medskip

The mathematical theory of superoscillations has attracted a lot of interest in the recent past. A first introduction to the mathematics of superoscillations in one variable and some of its applications to Schr\"odinger evolution of superoscillatory initial data can be found in \cite{ACSST17_1}. Nowadays, the literature on superoscillations is growing quite fast and, without claiming completeness, we mention that some of the most recent results on the time evolution of superoscillations are contained in the papers \cite{ABCS19,ABCS20,Unified,ACSS18,Pozzi,PeterQS} and in the references therein. The case of superoscillating functions in several variables and other interesting mathematical aspects of superoscillations are contained in the papers \cite{ACSST16,tre,ACJSSST22,Talbot,YgerFour}.

\medskip

A certain type of superoscillations, different from the ones in \eqref{Eq_Example}, were introduced by M.V. Berry in his work \textit{Faster than Fourier} \cite{B94}, where families of superoscillating functions are constructed that have the form
\begin{equation}\label{Eq_Berry_superoscillations}
F_\delta(x)=\frac{1}{\delta\sqrt{2\pi}}\int_\mathbb{R}e^{ik(u)x}e^{-\frac{(u-ia)^2}{2\delta^2}}du,\quad x\in\mathbb{R},\,\delta>0,
\end{equation}
with some fixed coefficient $a>0$ and a frequency function $k$ with values $k(u)\in[-1,1]$ for every $u\in\mathbb{R}$. Instead of a convergence result of the form \eqref{Eq_Example_convergence} it is shown in Berry's work that the local wavenumber exceeds the range $[-1,1]$, while the intrinsic frequencies $k$ take only values in $[-1,1]$. 
In the same way also the superoscillatory behaviour of the function
\begin{equation}\label{Eq_Superoscillating_sinc}
F_\delta(x)=\frac{2}{\delta}e^{-\frac{1}{\delta}}\sinc\bigg(\sqrt{x^2-\frac{2iax}{\delta}-\frac{1}{\delta^2}}\bigg)\quad x\in\mathbb{R},\,\delta>0,
\end{equation}
was investigated in \cite{B16}.

\medskip

A method of constructing superoscillating functions was introduced in \cite{ACSST15} and revisited in \cite{ASTY18}. The idea is to consider for some entire function $H(z)=\sum_{n=0}^\infty h_nz^n$ the associated infinite order differential operator $H(-i\frac{\partial}{\partial x})\coloneqq\sum_{n=0}^\infty h_n(-i\frac{\partial}{\partial x})^n$ and the corresponding generalized Schr\"odinger equation
\begin{equation}\label{Eq_Generalized_Schroedinger}
i\frac{\partial}{\partial t}\Psi(t,x)=-H\Big(-i\frac{\partial}{\partial x}\Big)\Psi(t,x),\quad t,x\in\mathbb{R}.
\end{equation}
Using superoscillating initial conditions it turns out that the solution $\Psi(t,\,\cdot\,)$ at any time $t\in\mathbb{R}$ is still superoscillating. Moreover, under some additional assumptions on the range of the function $H$ the solution $\Psi(\,\cdot\,,x)$ is also superoscillating in the time variable for every fixed $x\in\mathbb{R}$.

\medskip

Furthermore, a different perspective on superoscillatory behaviour was taken by P. Ferreira, A. Kempf and D. Lee in a series of papers \cite{FK02,FK06,FKR07,FL14,FL14_1,FL14_2}. There functions of the form
\begin{equation}\label{Eq_Ferreira_Kempf_Lee}
F(x)=\sum\limits_{l=0}^mc_l\sinc(x-x_l),\quad x\in\mathbb{R},
\end{equation}
were considered, where the coefficients $(c_l)_l$ are chosen such that $F(x_l)=a_l$, $l\in\{0,\dots,m\}$, admits the prescribed values $(a_l)_l$ at the prescibed points $(x_l)_l$. Choosing now for example $x_l=\delta l$ for some arbitrary small $\delta>0$ and $a_l=(-1)^l$ with alternating sign, the function $F$ admits an arbitrary large number of oscillations in an arbitrary small interval, while its Fourier transform is always supported in the bounded interval $[-1,1]$.

\medskip

The above examples show that there exist many variants of superoscillations, defined in various ways and having different types of oscillatory behaviour. One of the main purposes of this paper is to propose the general Definition~\ref{defi_Superoscillation} of superoscillations below and to show that all the above examples are contained as special cases in this concept. The crucial points in our analysis are the integral representation  \eqref{Eq_Superoscillating_integral} via complex Borel measures and the space $\mathcal{A}_1(\mathbb{C})$ of exponentially bounded entire functions, where the convergence of superoscillating functions is considered with
respect to the natural topology.

\begin{defi}
The space of entire functions with exponential growth is defined as
\begin{equation}\label{Eq_A1}
\mathcal{A}_1(\mathbb{C})\coloneqq\Set{F:\mathbb{C}\rightarrow\mathbb{C}\text{ entire} | \exists A,B\geq 0\text{ such that }|F(z)|\leq Ae^{B|z|}\text{ for all }z\in\mathbb{C}}.
\end{equation}
For any $F_0,(F_n)_n\in\mathcal{A}_1(\mathbb{C})$ we say that $F_n\rightarrow F_0$ converges in $\mathcal{A}_1(\mathbb{C})$ if and only if there exists some $B\geq 0$, such that
\begin{equation}\label{Eq_A1_convergence}
\lim\limits_{n\rightarrow\infty}\sup\limits_{z\in\mathbb{C}}|F_n(z)-F_0(z)|e^{-B|z|}=0.
\end{equation}
\end{defi}

The idea is now to consider superoscillations as certain superpositions of plane waves with frequencies in a bounded range $[-k_0,k_0]$, but converging to a plane wave with frequency $a\in\mathbb{R}\setminus[-k_0,k_0]$ exceeding this range.

\begin{defi}\label{defi_Superoscillation}
A sequence of functions of the form
\begin{equation}\label{Eq_Superoscillating_integral}
F_n(z)=\int_{-k_0}^{k_0}e^{ikz}d\mu_n(k),\quad z\in\mathbb{C},
\end{equation}
with a common maximal frequency $k_0>0$ and complex Borel measures $\mu_n$ on $[-k_0,k_0]$, is called superoscillating, if there exists some $a\in\mathbb{R}\setminus[-k_0,k_0]$, such that
\begin{equation}\label{Eq_Superoscillating_convergence}
\lim\limits_{n\rightarrow\infty}F_n(z)=e^{iaz}\quad\text{in }\mathcal{A}_1(\mathbb{C}).
\end{equation}
\end{defi}

Note, that any function of the form \eqref{Eq_Superoscillating_integral} is an element in $\mathcal{A}_1(\mathbb{C})$. In fact, since a complex measure has finite total variation, the exponential boundedness follows immediately from the estimate
\begin{equation*}
|F_n(z)|\leq\int_{-k_0}^{k_0}e^{|k||z|}d|\mu_n|(z)\leq|\mu_n|([-k_0,k_0])e^{k_0|z|},\quad z\in\mathbb{C}.
\end{equation*}
Moreover, the holomorphicity follows from a version of the dominated convergence theorem which allows to carry the derivatives inside the integral.

\medskip

Also note that the functions \eqref{Eq_Example} are superoscillation in the sense of Definition~\ref{defi_Superoscillation} since the representation \eqref{Eq_Superoscillating_integral} holds 
with the complex Borel measures
\begin{equation}\label{Eq_Point_measure}
\mu_n(B)\coloneqq\sum\limits_{j=0,k_j(n)\in B}^nC_j(n),\quad\text{for every Borel set }B\subseteq[-1,1],
\end{equation}
and the convergence \eqref{Eq_Superoscillating_convergence} was  shown in \cite[Theorem 2.1]{CSSY22}.

\medskip

In the main part of this paper we collect and generalize the different superoscillating functions that appear in the mathematical and physical literature, and prove that they fit into the context of Definition~\ref{defi_Superoscillation}. In particular, we verify the convergence in the space $\mathcal{A}_1(\mathbb{C})$ in all cases, which is important from an analytic point of view and puts the phenomenon of superoscillations in an appropriate mathematical perspective. In Section~\ref{sec_Berry_superoscillations} functions of the form \eqref{Eq_Berry_superoscillations} and certain extensions of these classes are discussed. The superoscillating $\sinc$-function \eqref{Eq_Superoscillating_sinc} is treated in Section~\ref{sec_Superoscillating_sinc_function}. Section~\ref{sec_Superoscillations_with_prescribed_frequencies} shows that the approximation \eqref{Eq_Arbitrary_frequencies} really leads to superoscillations in the sense of Definition~\ref{defi_Superoscillation} for a large class of given frequencies $k_j(n)$, see Corollary~\ref{cor_Fn_exponentials}. Moreover, in Theorem~\ref{satz_Fn_convergence} this concept is generalized to any low frequency function of the form \eqref{Eq_fjn2} instead of the exponentials $e^{ik_j(n)z}$. It then turns out that a version of the method \eqref{Eq_Ferreira_Kempf_Lee} can be considered as a special case of Corollary \ref{cor_Fn_derivative_convergence}, treated in 
Example~\ref{bsp_Ferreira_Kempf_Lee}. The final Section~\ref{sec_Superoscillating_solutions_of_the_generalized_Schroedinger_equation} then considers the method of generating superoscillating functions described in \eqref{Eq_Generalized_Schroedinger}.

\medskip

\noindent\textbf{Acknowledgements}.
Jussi Behrndt gratefully acknowledges financial support by the Austrian Science Fund (FWF): P 33568-N. Peter Schlosser's research was funded by the Austrian Science Fund (FWF) under Grant No. J 4685-N and by the European Union -- NextGenerationEU. This publication is also based upon work from COST Action CA 18232 MAT-DYN-NET, supported by COST (European Cooperation in Science and Technology), www.cost.eu.

\section{A construction of superoscillations due to M.V. Berry}\label{sec_Berry_superoscillations}

The considerations in this section are inspired by M.V. Berry's paper \cite{B94}, where superoscillatory functions of the form
\begin{equation}\label{abc}
F_\delta(x)=\frac{1}{\delta\sqrt{2\pi}}\int_\mathbb{R}e^{ik(u)x}e^{-\frac{(u-ia)^2}{2\delta^2}}du,\quad x\in\mathbb{R},
\end{equation}
were introduced. The idea behind this construction is that the complex Gaussian
\begin{equation*}
\frac{1}{\delta\sqrt{2\pi}}e^{-\frac{(u-ia)^2}{2\delta^2}}\rightarrow\delta(u-ia),\quad\text{as }\delta\rightarrow 0^+,
\end{equation*}
approximates the complex delta function and consequently
\begin{equation*}
F_\delta(x)\rightarrow e^{ik(ia)x},\quad\text{as }\delta\rightarrow 0^+,
\end{equation*}
converges to a plane wave. If one chooses a frequency function with values $k(u)\in[-1,1]$ for every $u\in\mathbb{R}$ and $k(ia)\in\mathbb{R}\setminus[-1,1]$ for some $a>0$, this indicates a superoscillatory behaviour of the functions $F_\delta$.

\medskip

We will now revisit this idea and improve it in three ways. Firstly, an additional function $g$ is included in the integral \eqref{abc}. This function does not affect the superoscillatory property of the $F_\delta$'s, but allows to modify their shape. Secondly, precise assumptions on the involved functions $g$ and $k$ are given. Thirdly and most importantly, while in \cite{B94} mainly the complex saddle point approximation is used to derive quantities like the local wave number, the convergence in the space $\mathcal{A}_1(\mathbb{C})$ is proven here. In particular, Corollary~\ref{cor_Fdelta} shows that the functions $F_\delta$ are indeed superoscillating in the sense of Definition~\ref{defi_Superoscillation}.

\begin{satz}\label{satz_Fdelta}
For $b\geq a>0$ let $\Delta\subseteq\mathbb{C}$ be the open triangle with corners $ia$ and $\pm b$. Consider measurable functions $k,g:\mathbb{R}\cup\overline{\Delta}\rightarrow\mathbb{C}$ which for some $B\geq 0$ satisfy the bounds
\begin{equation}\label{Eq_Fdelta_assumptions}
\sup\limits_{u\in\mathbb{R}}|k(u)|<\infty\quad\text{and}\quad\sup\limits_{u\in\mathbb{R}}|g(u)|e^{-B|u|}<\infty.
\end{equation}
Moreover, assume that $k$ and $g$ are continuous on $\overline{\Delta}$, holomorphic on $\Delta$, and in the case $a=b$ their derivatives are bounded on $\Delta$. Then, for every $\delta>0$, the functions
\begin{equation}\label{Eq_Fdelta}
F_\delta(z)\coloneqq\frac{1}{\delta\sqrt{2\pi}}\int_\mathbb{R}g(u)e^{ik(u)z}e^{-\frac{(u-ia)^2}{2\delta^2}}du,\quad z\in\mathbb{C},
\end{equation}
belong to the space $\mathcal{A}_1(\mathbb{C})$ and converge as
\begin{equation}\label{Eq_Fdelta_convergence}
\lim\limits_{\delta\rightarrow 0^+}F_\delta(z)=g(ia)e^{ik(ia)z}\quad\text{in }\mathcal{A}_1(\mathbb{C}).
\end{equation}
\end{satz}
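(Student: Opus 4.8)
The plan is to prove the statement in three stages: first establish that each $F_\delta$ lies in $\mathcal{A}_1(\mathbb{C})$, then deform the contour of integration in \eqref{Eq_Fdelta} so that the Gaussian peak at $u=ia$ becomes a genuine peak on the real axis, and finally apply a Laplace-type / dominated convergence argument to extract the limit $g(ia)e^{ik(ia)z}$ with uniform control in $z$ after multiplication by $e^{-B'|z|}$.

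For the first stage, I would note that on the real line the assumptions \eqref{Eq_Fdelta_assumptions} give $|g(u)e^{ik(u)z}|\le Ce^{B|u|}e^{\|k\|_\infty|z|}$, and the Gaussian factor $|e^{-(u-ia)^2/(2\delta^2)}|=e^{-(u^2-a^2)/(2\delta^2)}$ decays fast enough to make the integral converge and to bound $|F_\delta(z)|$ by a constant times $e^{\|k\|_\infty|z|}$; holomorphicity in $z$ follows by differentiating under the integral sign (justified by the same domination), exactly as in the remark after Definition~\ref{defi_Superoscillation}.

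The heart of the argument is the contour shift. Since $k$ and $g$ are holomorphic on the triangle $\Delta$ with corners $ia$ and $\pm b$ and continuous on $\overline{\Delta}$, Cauchy's theorem lets me replace $\int_{\mathbb{R}}$ by an integral over a contour that passes through (a neighbourhood of) $ia$, picking up the two nearly-vertical side segments of $\Delta$ near $\pm b$ as error terms; equivalently, substituting $u\mapsto u+ia$ on a suitable truncated contour turns the Gaussian into $\frac{1}{\delta\sqrt{2\pi}}e^{-u^2/(2\delta^2)}$, a bona fide approximate identity on $\mathbb{R}$. One then writes $F_\delta(z)=g(ia)e^{ik(ia)z}+\frac{1}{\delta\sqrt{2\pi}}\int (\,g(u+ia)e^{ik(u+ia)z}-g(ia)e^{ik(ia)z}\,)e^{-u^2/(2\delta^2)}\,du+(\text{contour errors})$. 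The contour-error terms are integrals near $\pm b$ where the Gaussian is $O(e^{-(b^2-a^2)/(2\delta^2)})$ times a polynomially-in-$z$-controlled factor, hence they vanish in $\mathcal{A}_1(\mathbb{C})$ as $\delta\to0^+$; the delicate case is $a=b$, where the triangle degenerates and this is precisely why the hypothesis that the derivatives of $k$ and $g$ are bounded on $\Delta$ is imposed — it gives a Lipschitz bound that forces the main integral itself to be $o(1)$ uniformly.

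For the main integral term, I would use the Lipschitz/holomorphic control of $k$ and $g$ near $ia$: $|g(u+ia)e^{ik(u+ia)z}-g(ia)e^{ik(ia)z}|\le C(1+|z|)e^{B''|z|}|u|$ for $u$ in a fixed neighbourhood of $0$, so that against the Gaussian this contributes $O(\delta)(1+|z|)e^{B''|z|}$, which tends to $0$ in $\mathcal{A}_1(\mathbb{C})$ once one absorbs the factor $(1+|z|)$ into a slightly larger exponential $e^{B'|z|}$. The main obstacle I anticipate is the bookkeeping of the contour deformation together with the uniformity in $z$: one must choose the truncated contour and the neighbourhood of $ia$ independently of $z$, track how the exponential weight $e^{ik(u)z}$ behaves as $u$ ranges over the tilted segments of $\partial\Delta$ (where $\Im(ik(u)z)$ can be large), and verify that every error term carries a weight no worse than $e^{B'|z|}$ for a single $B'$. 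Once the geometry of $\Delta$ is set up carefully, each individual estimate is routine, but the degenerate case $a=b$ and the need for a \emph{uniform} exponential bound are where the real work lies.
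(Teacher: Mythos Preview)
Your overall strategy---contour deformation through the triangle $\Delta$, followed by a Laplace-type estimate---is exactly what the paper does, and for $b>a$ your sketch can be made to work (this is essentially the paper's treatment of $F_\delta^{(2)}$ and $F_\delta^{(3)}$ in that case). However, there is a genuine geometric misstep and a resulting gap in the degenerate case $a=b$.

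The misstep is the sentence ``substituting $u\mapsto u+ia$ \dots turns the Gaussian into $\frac{1}{\delta\sqrt{2\pi}}e^{-u^2/(2\delta^2)}$, a bona fide approximate identity on $\mathbb{R}$.'' The horizontal line through $ia$ meets $\overline{\Delta}$ only at the single point $ia$; the only admissible contour through $ia$ consists of the two \emph{tilted} sides $[ia,\pm b]$. After the shift, the integration variable therefore runs over the segments $[0,\pm b-ia]$, and the Gaussian becomes $e^{-t^2(b\mp ia)^2/2}$, whose modulus is $e^{-t^2(b^2-a^2)/2}$. For $b>a$ this still decays and your Lipschitz argument survives; for $a=b$ one has $\mathrm{Re}\big((b-ia)^2\big)=0$, so the Gaussian has constant modulus $1$ along the entire segment.

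This is the gap. In the case $a=b$, your Lipschitz bound $|g(u)e^{ik(u)z}-g(ia)e^{ik(ia)z}|\le C(1+|z|)e^{B''|z|}|u-ia|$ inserted into the main integral yields, in absolute value, at best
\[
\frac{1}{\delta}\int_0^{a}\!C(1+|z|)e^{B''|z|}\,\sqrt{2}\,s\,ds\;=\;O(1/\delta),
\]
which diverges, not $o(1)$. The bounded-derivative hypothesis is not used merely as a Lipschitz estimate for the integrand; the paper exploits it through an \emph{integration by parts} (a stationary-phase step), writing the oscillatory factor as $-\frac{\delta\sqrt{\pi}}{\sqrt{2}}\frac{d}{du}\erfc\big(\frac{u-ia}{\delta\sqrt{2}}\big)$ and transferring the $u$-derivative onto $g(u)e^{ik(u)z}$. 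This cancels the factor $1/\delta$ and produces an integrand bounded by $\big(\Vert g'\Vert_\Delta+\Vert gk'\Vert_\Delta|z|\big)\big|\erfc\big(\frac{s(1-i)}{\delta\sqrt{2}}\big)\big|$, and the bound $\big|\erfc\big(\frac{s(1-i)}{\delta\sqrt{2}}\big)\big|\le\min\{1,\delta\sqrt{2}/(s\sqrt{\pi})\}$ then gives the required convergence. Your proposal does not contain this mechanism; once you replace the nonexistent horizontal contour by the tilted sides and add the integration-by-parts step for $a=b$, the argument goes through and coincides with the paper's.
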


\begin{proof}
We shall make use of the constants
\begin{equation*}
C\coloneqq\sup\limits_{u\in\mathbb{R}}|k(u)|\quad\text{and}\quad A\coloneqq\sup\limits_{u\in\mathbb{R}}|g(u)|e^{-B|u|},
\end{equation*}
which are finite due to assumption \eqref{Eq_Fdelta_assumptions}. A straightforward estimate shows
\begin{equation}\label{esti1}
|F_\delta(z)|\leq\frac{1}{\delta\sqrt{2\pi}}\int_\mathbb{R}|g(u)|e^{|k(u)||z|}e^{\frac{-u^2+a^2}{2\delta^2}}du\leq\frac{A}{\delta\sqrt{2\pi}}e^{\frac{a^2}{2\delta^2}} \bigg(\int_\mathbb{R}e^{B|u|}e^{-\frac{u^2}{2\delta^2}}du \bigg)e^{C|z|}, 
\end{equation}
and hence $F_\delta$ is well defined for every $\delta>0$. Furthermore, a version of the dominated convergence 
theorem shows that $F_\delta$ is holomorphic and together with the above estimate it follows that $F_\delta$, 
$\delta>0$,
belongs to the space $\mathcal{A}_1(\mathbb{C})$.

\medskip

In the following we will verify that $F_\delta$ converges as in \eqref{Eq_Fdelta_convergence}. First, we change the part $-b\rightarrow b$ of the integration path in \eqref{Eq_Fdelta} to $-b\rightarrow ia\rightarrow b$, that is, we use
\begin{equation*}
\int_{-b}^bg(u)e^{ik(u)z}e^{-\frac{(u-ia)^2}{2\delta^2}}du=\int_{-b}^{ia}g(u)e^{ik(u)z}e^{-\frac{(u-ia)^2}{2\delta^2}}du+\int_{ia}^bg(u)e^{ik(u)z}e^{-\frac{(u-ia)^2}{2\delta^2}}du.
\end{equation*}
In fact,
since $g$ and $k$ are holomorphic on $\Delta$ and continuous on $\overline{\Delta}$ the above equality follows 
from the Cauchy theorem applied to the boundary of the scaled triangle $\Delta_\varepsilon$ and then taking the limit $\varepsilon\rightarrow 0^+$.
\begin{center}
\begin{tikzpicture}[scale=1.4]
\draw[->] (-1.2,0)--(1.3,0);
\draw (1.2,0) node[anchor=south] {\tiny{$\Re$}};
\draw[->] (0,-0.2)--(0,1.2) node[anchor=west] {\tiny{$\Im$}};
\draw (-1,0)--(0,1)--(1,0);
\draw[ultra thick] (-0.76,0.1)--(0,0.86)--(0.76,0.1)--(-0.76,0.1);
\draw[ultra thick,->] (-0.76,0.1)--(-0.16,0.7);
\draw[ultra thick,->] (0,0.86)--(0.65,0.21);
\draw[ultra thick,->] (-0.76,0.1)--(0.4,0.1);
\draw[fill=white] (-1,0) circle (0.05) node[anchor=north] {\small{-$b$}};
\draw[fill=white] (1,0) circle (0.05) node[anchor=north] {\small{$b$}};
\draw[fill=white] (0,1) circle (0.05) node[anchor=west] {\small{$ia$}};
\draw[>-<] (0,-0.05)--(0,0.165);
\draw (-0.05,0.05) node[anchor=north west] {\tiny{$\varepsilon$}};
\draw[>-<] (-0.535,0.535)--(-0.384,0.384);
\draw (-0.55,0.55)--(-0.38,0.38);
\draw (-0.5,0.5) node[anchor=south] {\tiny{$\varepsilon$}};
\draw[>-<] (0.535,0.535)--(0.384,0.384);
\draw (0.55,0.55)--(0.38,0.38);
\draw (0.5,0.5) node[anchor=south] {\tiny{$\varepsilon$}};
\draw[fill=white] (-0.76,0.1) circle (0.05);
\draw[fill=white] (0.76,0.1) circle (0.05);
\draw[fill=white] (0,0.86) circle (0.05);
\draw (-0.05,0.35) node[anchor=west] {$\Delta_\varepsilon$};
\end{tikzpicture}
\end{center}
Hence we can split up the function $F_\delta$ into the four parts
\begin{equation}\label{Eq_Fdelta_12}
 F_\delta(z)=F_\delta^{(1)}(z)+F_\delta^{(2)}(z)+F_\delta^{(3)}(z)+F_\delta^{(4)}(z),\quad z\in\mathbb{C},
\end{equation}
where 
\begin{equation*}
 \begin{split}
F_\delta^{(1)}(z)&=\frac{1}{\delta\sqrt{2\pi}}\int_{-\infty}^{-b}g(u)e^{ik(u)z}e^{-\frac{(u-ia)^2}{2\delta^2}}du,\\
F_\delta^{(2)}(z)&=\frac{1}{\delta\sqrt{2\pi}}\int_{-b}^{ia}g(u)e^{ik(u)z}e^{-\frac{(u-ia)^2}{2\delta^2}}du,\\
F_\delta^{(3)}(z)&=\frac{1}{\delta\sqrt{2\pi}}\int_{ia}^{b}g(u)e^{ik(u)z}e^{-\frac{(u-ia)^2}{2\delta^2}}du,\\
F_\delta^{(4)}(z)&=\frac{1}{\delta\sqrt{2\pi}}\int_{b}^{\infty}g(u)e^{ik(u)z}e^{-\frac{(u-ia)^2}{2\delta^2}}du.
 \end{split}
\end{equation*}
For these four functions we will now investigate the limit $\delta\rightarrow 0^+$ separately. Starting with $F_\delta^{(4)}$ we estimate in the same way as in \eqref{esti1}
\begin{equation*}
|F_\delta^{(4)}(z)|\leq\frac{A}{\delta\sqrt{2\pi}}e^{C|z|}e^{\frac{a^2}{2\delta^2}}\int_b^\infty e^{Bu}e^{-\frac{u^2}{2\delta^2}}du\leq\frac{A}{\sqrt{2\pi}}e^{C|z|}e^{bB}\int_0^\infty e^{-\frac{v^2}{2}}e^{v(B\delta-\frac{b}{\delta})}dv,
\end{equation*}
where in the second inequality we used $a\leq b$ and substituted $v=\frac{u-b}{\delta}$. Shifting the exponential $e^{C|z|}$ to the left side of this inequality leads to a $z$-independent right hand side, which 
converges as
\begin{equation*}
\sup\limits_{z\in\mathbb{C}}|F_\delta^{(4)}(z)|e^{-C|z|}\leq\frac{A}{\sqrt{2\pi}}e^{bB}\int_0^\infty e^{-\frac{v^2}{2}}e^{v(B\delta-\frac{b}{\delta})}dv\rightarrow 0,\quad\text{as }\delta\rightarrow 0^+.
\end{equation*}
According to \eqref{Eq_A1_convergence} this is exactly the convergence
\begin{equation*}
\lim\limits_{\delta\rightarrow 0^+}F_\delta^{(4)}(z)=0\quad\text{in }\mathcal{A}_1(\mathbb{C}).
\end{equation*}
In the same way we also prove
\begin{equation*}
\lim\limits_{\delta\rightarrow 0^+}F_\delta^{(1)}(z)=0\quad\text{in }\mathcal{A}_1(\mathbb{C}).
\end{equation*}
For the function $F_\delta^{(3)}(z)$ we first use the identity $$
\int_{ia}^be^{-\frac{(u-ia)^2}{2\delta^2}}du=\frac{\delta\sqrt{\pi}}{\sqrt{2}}\erf\Big(\frac{b-ia}{\delta\sqrt{2}}\Big),
$$ 
which is an immediate consequence of the definition of the error function, to rewrite the difference
\begin{equation}\label{Eq_Fdelta3_difference}
F_\delta^{(3)}(z)-\frac{g(ia)}{2}e^{ik(ia)z}=\frac{1}{\delta\sqrt{2\pi}}\int_{ia}^b\Big(g(u)e^{ik(u)z}-\frac{g(ia)}{\erf\big(\frac{b-ia}{\delta\sqrt{2}}\big)}e^{ik(ia)z}\Big)e^{-\frac{(u-ia)^2}{2\delta^2}}du.
\end{equation}
In the case $b>a$ we use $u_\delta(t)\coloneqq ia+\delta t(b-ia)$, $t\in[0,\frac{1}{\delta}]$, to parametrize this integral as
\begin{equation*}
F_\delta^{(3)}(z)-\frac{g(ia)}{2}e^{ik(ia)z}=\frac{b-ia}{\sqrt{2\pi}}\int_0^{\frac{1}{\delta}}\Big(g(u_\delta(t))e^{ik(u_\delta(t))z}-\frac{g(ia)}{\erf\big(\frac{b-ia}{\delta\sqrt{2}}\big)}e^{ik(ia)z}\Big)e^{-\frac{t^2(b-ia)^2}{2}}dt,
\end{equation*}
and estimate the difference by
\begin{align*}
\Big|F_\delta^{(3)}(z)-\frac{g(ia)}{2}e^{ik(ia)z}\Big|\leq\frac{\sqrt{b^2+a^2}}{\sqrt{2\pi}}\int_0^{\frac{1}{\delta}}\bigg(&|g(u_\delta(t))|\,\big|e^{ik(u_\delta(t))z}-e^{ik(ia)z}\big| \\
&+\bigg|g(u_\delta(t))-\frac{g(ia)}{\erf(\frac{b-ia}{\delta\sqrt{2}})}\bigg|e^{|k(ia)|\,|z|}\bigg)e^{-\frac{t^2(b^2-a^2)}{2}}dt.
\end{align*}
Since $g,k$ are continuous on $\overline{\Delta}$, we denote their respective suprema by $\Vert\cdot\Vert_\mathsmaller{\Delta}$. Using the estimate \eqref{Eq_Exponential_estimate} we obtain
\begin{align*}
\Big|F_\delta^{(3)}(z)-\frac{g(ia)}{2}e^{ik(ia)z}\Big|\leq\frac{\sqrt{b^2+a^2}}{\sqrt{2\pi}}e^{\Vert k\Vert_\mathsmaller{\Delta}|z|}\int_0^{\frac{1}{\delta}}\bigg(&|k(u_\delta(t))-k(ia)|\,\Vert g\Vert_\mathsmaller{\Delta}|z| \\
&+\bigg|g(u_\delta(t))-\frac{g(ia)}{\erf(\frac{b-ia}{\delta\sqrt{2}})}\bigg|\bigg)e^{-\frac{t^2(b^2-a^2)}{2}}dt.
\end{align*}
Using also the inequality  $|z|\leq e^{|z|}$, which is an immediate consequence of the power series representation of the exponential, we find the estimate
\begin{align*}
\Big|F_\delta^{(3)}(z)-\frac{g(ia)}{2}e^{ik(ia)z}\Big|e^{-(\Vert k\Vert_\mathsmaller{\Delta}+1)|z|}\leq\frac{\sqrt{b^2+a^2}}{\sqrt{2\pi}}\int_0^{\frac{1}{\delta}}&\bigg(|k(u_\delta(t))-k(ia)|\,\Vert g\Vert_\mathsmaller{\Delta} \\
&+\bigg|g(u_\delta(t))-\frac{g(ia)}{\erf(\frac{b-ia}{\delta\sqrt{2}})}\bigg|\bigg)e^{-\frac{t^2(b^2-a^2)}{2}}dt.
\end{align*}
In this form we note that the right hand side is independent of $z$ and vanishes as $\delta\rightarrow 0^+$ due to the dominated convergence theorem, the continuity of $g$ and $k$ and the fact that $\lim_{\delta\rightarrow 0^+}\erf(\frac{b-ia}{\delta\sqrt{2}})=1$ by the choice $b>a$. This shows the convergence
\begin{equation*}
\lim\limits_{\delta\rightarrow 0^+}F_\delta^{(3)}(z)=\frac{g(ia)}{2}e^{ik(ia)z}\quad\text{in }\mathcal{A}_1(\mathbb{C}).
\end{equation*}
In the case $a=b$ we go back to formula \eqref{Eq_Fdelta3_difference} and note that all functions in the integrand are continuous. Hence we can shift the integration path into the open triangle $\Delta$ where $g$ and $k$ are holomorphic, i.e. we write the integral as the limit
\begin{equation*}
F_\delta^{(3)}(z)-\frac{g(ia)}{2}e^{ik(ia)z}=\frac{1}{\delta\sqrt{2\pi}}\lim\limits_{\varepsilon\rightarrow 0^+}\int_{i(a-\varepsilon)}^{a-\varepsilon}\Big(g(u)e^{ik(u)z}-\frac{g(ia)}{\erf\big(\frac{a(1-i)}{\delta\sqrt{2}}\big)}e^{ik(ia)z}\Big)e^{-\frac{(u-ia)^2}{2\delta^2}}du.
\end{equation*}
Using the complementary error function $\erfc(z)\coloneqq 1-\erf(z)$ and its derivative
\begin{equation*}
\frac{d}{du}\erfc\Big(\frac{u-ia}{\delta\sqrt{2}}\Big)=-\frac{\sqrt{2}}{\delta\sqrt{\pi}}e^{-\frac{(u-ia)^2}{2\delta^2}},
\end{equation*}
integration by parts leads to
\begin{align*}
&F_\delta^{(3)}(z)-\frac{g(ia)}{2}e^{ik(ia)z} \\
&\quad=-\frac{1}{2}\lim\limits_{\varepsilon\rightarrow 0^+}\int_{i(a-\varepsilon)}^{a-\varepsilon}\Big(g(u)e^{ik(u)z}-\frac{g(ia)}{\erf\big(\frac{a(1-i)}{\delta\sqrt{2}}\big)}e^{ik(ia)z}\Big)\frac{d}{du}\erfc\Big(\frac{u-ia}{\delta\sqrt{2}}\Big)du \\
&\quad=-\frac{g(a)}{2}e^{ik(a)z}\erfc\Big(\frac{a(1-i)}{\delta\sqrt{2}}\Big)+\frac{1}{2}\lim\limits_{\varepsilon\rightarrow 0^+}\int_{i(a-\varepsilon)}^{a-\varepsilon}\frac{d}{du}\Big(g(u)e^{ik(u)z}\Big)\erfc\Big(\frac{u-ia}{\delta\sqrt{2}}\Big)du \\
&\quad=-\frac{g(a)}{2}e^{ik(a)z}\erfc\Big(\frac{a(1-i)}{\delta\sqrt{2}}\Big)+\frac{1}{2}\lim\limits_{\varepsilon\rightarrow 0^+}\int_0^{a-\varepsilon}\frac{d}{ds}\Big(g(u_\varepsilon(s))e^{ik(u_\varepsilon(s))z}\Big)\erfc\Big(\frac{u_\varepsilon(s)-ia}{\delta\sqrt{2}}\Big)ds,
\end{align*}
where in the last line we parametrized the complex path integral by 
$u_\varepsilon(s)\coloneqq i(a-\varepsilon)+s(1-i)$, $s\in [0,a-\varepsilon]$.  Using this, we can now estimate the difference by
\begin{align}
\Big|F_\delta^{(3)}(z)-\frac{g(ia)}{2}e^{ik(ia)z}\Big|&\leq e^{\Vert k\Vert_\mathsmaller{\Delta}|z|}\bigg(\frac{\Vert g\Vert_\mathsmaller{\Delta}}{2}\Big|\erfc\Big(\frac{a(1-i)}{\delta\sqrt{2}}\Big)\Big| \notag\\
&\hspace{2cm}+\frac{\Vert g'\Vert_\mathsmaller{\Delta}+\Vert gk'\Vert_\mathsmaller{\Delta}|z|}{\sqrt{2}}\lim\limits_{\varepsilon\rightarrow 0^+}\int_0^{a-\varepsilon}\Big|\erfc\Big(\frac{u_\varepsilon(s)-ia}{\delta\sqrt{2}}\Big)\Big|ds\bigg) \notag\\
&\leq e^{(\Vert k\Vert_\mathsmaller{\Delta}+1)|z|}\bigg(\frac{\Vert g\Vert_\mathsmaller{\Delta}}{2}\Big|\erfc\Big(\frac{a(1-i)}{\delta\sqrt{2}}\Big)\Big| \notag\\
&\hspace{2.5cm}+\frac{\Vert g'\Vert_\mathsmaller{\Delta}+\Vert gk'\Vert_\mathsmaller{\Delta}}{\sqrt{2}}\int_0^a\Big|\erfc\Big(\frac{s(1-i)}{\delta\sqrt{2}}\Big)\Big|ds\bigg), \label{Eq_Fdelta3}
\end{align}
where in the second inequality we performed the limit $\varepsilon\rightarrow 0^+$ and used the estimate 
$|z|\leq e^{|z|}$ to get rid of the $z$-dependency inside the brackets. Using the representation $\erfc(z)=\frac{2}{\sqrt{\pi}}\int_0^\infty e^{-(t+z)^2}dt$ of the complementary error function, see \cite[Eq. (7.1.2)]{AS72}, it can be estimated by
\begin{equation*}
\Big|\erfc\Big(\frac{s(1-i)}{\delta\sqrt{2}}\Big)\Big|=\frac{2}{\sqrt{\pi}}\Big|\int_0^\infty e^{-(t+\frac{s(1-i)}{\delta\sqrt{2}})^2}dt\Big|\leq\frac{2}{\sqrt{\pi}}\int_0^\infty e^{-t^2-\frac{\sqrt{2}\,ts}{\delta}}dt\leq\min\Big\{\frac{\delta\sqrt{2}}{s\sqrt{\pi}},1\Big\},
\end{equation*}
where in the last inequality we estimated either $e^{-t^2}\leq 1$ or $e^{-\frac{\sqrt{2}\,ts}{\delta}}\leq 1$ to get the minimum as an upper bound. This inequality on the one hand gives an $\delta$-independent bound of the last integral in \eqref{Eq_Fdelta3} as well as for every $s\in(0,a]$ the pointwise convergence as $\delta\rightarrow 0^+$. With the dominated convergence theorem this then leads to
\begin{align*}
\Big|F_\delta^{(3)}(z)-\frac{g(ia)}{2}e^{ik(ia)z}\Big|e^{-(\Vert k\Vert_\mathsmaller{\Delta}+1)|z|}\leq\,&\frac{\Vert g\Vert_\mathsmaller{\Delta}}{2}\Big|\erfc\Big(\frac{a(1-i)}{\delta\sqrt{2}}\Big)\Big| \\
&+\frac{\Vert g'\Vert_\mathsmaller{\Delta}+\Vert gk'\Vert_\mathsmaller{\Delta}}{\sqrt{2}}\int_0^a\Big|\erfc\Big(\frac{s(1-i)}{\delta\sqrt{2}}\Big)\Big|ds\overset{\delta\rightarrow 0^+}{\longrightarrow}0.
\end{align*}
Since the right hand side is independent of $|z|$ we conclude the convergence
\begin{equation*}
\lim\limits_{\delta\rightarrow 0^+}F_\delta^{(3)}(z)=\frac{g(ia)}{2}e^{ik(ia)z}\quad\text{in }\mathcal{A}_1(\mathbb{C}).
\end{equation*}
For the same reason also
\begin{equation*}
\lim\limits_{\delta\rightarrow 0^+}F_\delta^{(2)}(z)=\frac{g(ia)}{2}e^{ik(ia)z}\quad\text{in }\mathcal{A}_1(\mathbb{C}).
\end{equation*}
Summing up, we have proved that all the terms in \eqref{Eq_Fdelta_12} converge in $\mathcal{A}_1(\mathbb{C})$
and hence \eqref{Eq_Fdelta_convergence} follows. 
\end{proof}

The next corollary puts the result of Theorem~\ref{satz_Fdelta} into the perspective of superoscillations and proves that under certain assumptions on the frequency function $k$, the resulting functions $F_\delta$ satisfy Definition~\ref{defi_Superoscillation}.

\begin{cor}\label{cor_Fdelta}
Let $b\geq a>0$ and $k,g$ be as in Theorem~\ref{satz_Fdelta}. Assume, in addition, that there exists some $k_0>0$ 
such that
\begin{equation}\label{Eq_Fdelta_superoscillating_assumptions}
g(ia)=1,\quad k(ia)\in\mathbb{R}\setminus[-k_0,k_0]\quad\text{and}\quad k(u)\in[-k_0,k_0]\text{ for every }u\in\mathbb{R}.
\end{equation}
Then the functions $F_\delta$ in \eqref{Eq_Fdelta} are superoscillating with limit
\begin{equation}\label{Eq_Fdelta_2}
\lim\limits_{\delta\rightarrow 0^+}F_\delta(z)=e^{ik(ia)z}\quad\text{in }\mathcal{A}_1(\mathbb{C}).
\end{equation}
\end{cor}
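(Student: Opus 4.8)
The plan is to read off \eqref{Eq_Fdelta_2} directly from Theorem~\ref{satz_Fdelta} and then to verify that the additional hypotheses \eqref{Eq_Fdelta_superoscillating_assumptions} place the family $(F_\delta)_{\delta>0}$ into the template of Definition~\ref{defi_Superoscillation}.

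First, the functions $k$ and $g$ satisfy all assumptions of Theorem~\ref{satz_Fdelta}, so that theorem gives $F_\delta\in\mathcal{A}_1(\mathbb{C})$ and $\lim_{\delta\to 0^+}F_\delta(z)=g(ia)e^{ik(ia)z}$ in $\mathcal{A}_1(\mathbb{C})$; inserting $g(ia)=1$ from \eqref{Eq_Fdelta_superoscillating_assumptions} yields \eqref{Eq_Fdelta_2}. (If Definition~\ref{defi_Superoscillation} is to be applied verbatim to a sequence, one fixes any null sequence $\delta_n\downarrow 0$ and works with $F_{\delta_n}$.)

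Second, I would rewrite $F_\delta$ in the form \eqref{Eq_Superoscillating_integral}. For $\delta>0$ define the complex Borel measure
\[
d\nu_\delta(u)\coloneqq\frac{1}{\delta\sqrt{2\pi}}\,g(u)\,e^{-\frac{(u-ia)^2}{2\delta^2}}\,du \quad\text{on }\mathbb{R};
\]
it has finite total variation because $|g(u)|\leq Ae^{B|u|}$ and $\bigl|e^{-(u-ia)^2/(2\delta^2)}\bigr|=e^{(a^2-u^2)/(2\delta^2)}$, so the density is integrable in $u$. By \eqref{Eq_Fdelta_superoscillating_assumptions} the restriction $k|_{\mathbb{R}}$ is a measurable map $\mathbb{R}\to[-k_0,k_0]$, hence the pushforward $\mu_\delta\coloneqq (k|_{\mathbb{R}})_*\nu_\delta$ is a complex Borel measure on $[-k_0,k_0]$, and the change-of-variables formula gives
\[
\int_{-k_0}^{k_0}e^{ikz}\,d\mu_\delta(k)=\int_{\mathbb{R}}e^{ik(u)z}\,d\nu_\delta(u)=F_\delta(z),\qquad z\in\mathbb{C}.
\]
Thus every $F_\delta$ is of the form \eqref{Eq_Superoscillating_integral} with the common maximal frequency $k_0$.

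Finally, by \eqref{Eq_Fdelta_superoscillating_assumptions} the number $k(ia)$ lies in $\mathbb{R}\setminus[-k_0,k_0]$, and \eqref{Eq_Fdelta_2} says $F_\delta(z)\to e^{ik(ia)z}$ in $\mathcal{A}_1(\mathbb{C})$, which is precisely the convergence \eqref{Eq_Superoscillating_convergence} required in Definition~\ref{defi_Superoscillation}. Hence the $F_\delta$ are superoscillating. The only step beyond routine bookkeeping is the change-of-variables identity: since $k$ is assumed only measurable on $\mathbb{R}$, one cannot substitute in a Riemann integral and must instead invoke the pushforward of measures; but this is standard and presents no genuine obstacle.
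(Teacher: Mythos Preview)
Your proof is correct and follows essentially the same approach as the paper: both derive \eqref{Eq_Fdelta_2} directly from Theorem~\ref{satz_Fdelta} with $g(ia)=1$, then introduce the complex Borel measure with density $\frac{1}{\delta\sqrt{2\pi}}g(u)e^{-(u-ia)^2/(2\delta^2)}$ on $\mathbb{R}$ and push it forward along $k|_{\mathbb{R}}$ to obtain the representation \eqref{Eq_Superoscillating_integral}. Your additional remarks on total variation and on passing to a null sequence $\delta_n\downarrow 0$ are minor clarifications not present in the paper, but the argument is otherwise the same.
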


\begin{proof}
Since $g(ia)=1$ by assumption, the convergence \eqref{Eq_Fdelta_2} follows from Theorem~\ref{satz_Fdelta}. Hence it remains to verify the integral representation \eqref{Eq_Superoscillating_integral}. For this we first define the complex Borel measure
\begin{equation*}
\sigma_\delta(A)\coloneqq\frac{1}{\delta\sqrt{2\pi}}\int_Ag(u)e^{-\frac{(u-ia)^2}{2\delta^2}}du
\end{equation*}
for Borel sets $A\subseteq\mathbb{R}$.
With this measure the function $F_\delta$ admits the representation
\begin{equation*}
F_\delta(z)=\int_\mathbb{R}e^{ik(u)z}d\sigma_\delta(u),\quad z\in\mathbb{C}.
\end{equation*}
In a second step we consider the Borel measure
\begin{equation*}
\mu_\delta(B)\coloneqq\sigma_\delta\big(\Set{u\in\mathbb{R} | k(u)\in B}\big)
\end{equation*}
for Borel sets $B\subseteq[-k_0,k_0]$ and rewrite
$F_\delta$ as
\begin{equation*}
F_\delta(z)=\int_{-k_0}^{k_0}e^{ikz}d\mu_\delta(k),\quad z\in\mathbb{C},
\end{equation*}
which is exactly the form \eqref{Eq_Superoscillating_integral}. Since by assumption $k(u)\in[-k_0,k_0]$ for every $u\in\mathbb{R}$ and since the frequency $k(ia)$ of the limit function in \eqref{Eq_Fdelta_2} lies in $\mathbb{R}\setminus[-k_0,k_0]$, the functions $F_\delta$ are indeed superoscillating in the sense of Definition~\ref{defi_Superoscillation}.
\end{proof}

\begin{bem}
Note that in Theorem~\ref{satz_Fdelta} and Corollary~\ref{cor_Fdelta} we may modify the functions $k,g$ on $\mathbb{R}\setminus[-b,b]$ as long as \eqref{Eq_Fdelta_assumptions} and \eqref{Eq_Fdelta_superoscillating_assumptions} remain valid. Since this only changes the functions $F_\delta$ but not the limit $g(ia)e^{ik(ia)z}$ in \eqref{Eq_Fdelta_convergence}, any such modification leads to a new family of superoscillating functions.
\end{bem}

To illustrate our result we provide some possible choices of functions $k$ and $g$ which lead to superoscillations in Corollary~\ref{cor_Fdelta}. In particular, we will use the functions $k$ from the original paper \cite{B94}.

\begin{bsp}
Possible choices of the frequency function are
\begin{equation*}
k_1(u)=\frac{1}{1+\frac{u^2}{2}},\quad k_2(u)=\frac{1}{\cosh(u)},\quad k_3(u)=e^{-\frac{u^2}{2}},\quad k_4(u)=\cos(u).
\end{equation*}
All these functions satisfy $k_{j}(u)\in[-1,1]$ for $j=1,2,3,4$ and for every $u\in\mathbb{R}$. Evaluated for complex arguments these functions admit the values
\begin{equation*}
k_1(ia)=\frac{1}{1-\frac{a^2}{2}},\quad k_2(ia)=\frac{1}{\cos(a)},\quad k_3(ia)=e^{\frac{a^2}{2}},\quad k_4(ia)=\cosh(a),
\end{equation*}
and hence $k_1(ia)>1$ for every $a\in(0,\sqrt{2})$, $k_2(ia)>1$ for every $a\in(0,\frac{\pi}{2})$, and $k_{3,4}(ia)>1$ for every $a>0$. With the constant $g_{j}\equiv 1$ for $j=1,2,3,4$, the assumptions of Corollary~\ref{cor_Fdelta} are satisfied and we end up with superoscillating functions $F_\delta$. Another possible choice considered in \cite{B94} is for any $0<a\leq 2$ the functions
\begin{equation*}
g_5(u)=\begin{cases} 1, & \text{if }u\in\overline{\Delta}, \\ 0, & \text{if }u\in\mathbb{R}\setminus[-a,a], \end{cases}\quad\text{and}\quad k_5(u)=\begin{cases} 1-\frac{u^2}{2}, & \text{if }u\in\overline{\Delta} \\ 0, & \text{if }u\in\mathbb{R}\setminus[-a,a]. \end{cases}
\end{equation*}
Here, $\Delta$ is the triangle from Theorem~\ref{satz_Fdelta} with $b=a$. Also these functions satisfy the assumtions of Corollary~\ref{cor_Fdelta} and lead to superoscillating functions $F_\delta$.
\end{bsp}

\section{Superoscillating sinc-functions}\label{sec_Superoscillating_sinc_function}

In \cite{B16} M.V. Berry considered another type of superoscillating functions 
\begin{equation}\label{Eq_Fdelta_sinc}
F_\delta(z)=\frac{2}{\delta}e^{-\frac{1}{\delta}}\sinc\bigg(\sqrt{z^2-\frac{2iaz}{\delta}-\frac{1}{\delta^2}}\bigg),\quad z\in\mathbb{C},
\end{equation}
for some fixed $a>1$ and every $\delta>0$; here $\sinc(z)\coloneqq\frac{\sin(z)}{z}$ denotes 
the Sinus cardinalis and the complex square root is fixed by
\begin{equation}\label{Eq_Root}
0\leq\Arg(\sqrt{\,\cdot\,}\,)<\pi.
\end{equation}
The aim of this section is to verify that \eqref{Eq_Fdelta_sinc} is indeed superoscillatory in the sense of Definition~\ref{defi_Superoscillation}. This will be done in two steps. In Theorem~\ref{satz_Fdelta_sinc} we prove the convergence \eqref{Eq_Superoscillating_convergence} and with the help of Lemma~\ref{lem_Integral_representation_of_sinc} we conclude the integral representation \eqref{Eq_Superoscillating_integral}.

\medskip

We start with a technical estimate of the complex square root, which will be important for the $\mathcal{A}_1$-convergence of the functions \eqref{Eq_Fdelta_sinc} in Theorem~\ref{satz_Fdelta_sinc}.

\begin{lem}\label{lem_Root_estimate}
For any $a>1$ there exists some $C\geq 0$, such that
\begin{equation}\label{Eq_Root_estimate}
\big|\sqrt{z^2-2iaz-1}+az-i\big|\leq C\min\{|z|,|z|^2\},\quad z\in\mathbb{C}.
\end{equation}
\end{lem}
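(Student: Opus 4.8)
The plan is to reduce everything to one elementary algebraic identity. Set $w=w(z):=z^2-2iaz-1$. Since $-az+i=-(az-i)$, the difference of squares gives
\[
(\sqrt{w}+az-i)(\sqrt{w}-az+i)=w-(az-i)^2=(z^2-2iaz-1)-(a^2z^2-2iaz-1)=(1-a^2)z^2 .
\]
Moreover $\sqrt{w(z)}-az+i$ never vanishes: the equality $\sqrt{w}=az-i$ would, after squaring, force $(1-a^2)z^2=0$, hence $z=0$, but at $z=0$ one has $\sqrt{w(0)}=\sqrt{-1}=i\ne -i$ by the branch convention \eqref{Eq_Root}. Therefore
\[
\big|\sqrt{w(z)}+az-i\big|=\frac{(a^2-1)\,|z|^2}{\big|\sqrt{w(z)}-az+i\big|},\qquad z\in\mathbb{C},
\]
and it remains to bound the right-hand side by $C\min\{|z|,|z|^2\}$. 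I would do this in three regimes of $|z|$, using the identity only in one of them.

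For $|z|\ge 1$: from $|w(z)|\le|z|^2+2a|z|+1\le(|z|+a)^2$ we get $|\sqrt{w(z)}|=|w(z)|^{1/2}\le|z|+a$, whence $|\sqrt{w(z)}+az-i|\le(1+a)|z|+(a+1)\le 2(a+1)|z|=2(a+1)\min\{|z|,|z|^2\}$. For the intermediate range $\rho\le|z|\le 1$ (with $\rho\in(0,1]$ chosen below) the same crude bound gives $|\sqrt{w(z)}+az-i|\le(1+a)+a+1=2(a+1)$, and since there $\min\{|z|,|z|^2\}=|z|^2\ge\rho^2$, this is $\le\frac{2(a+1)}{\rho^2}\min\{|z|,|z|^2\}$.

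The delicate regime is $|z|$ small, where $\min\{|z|,|z|^2\}=|z|^2$, so a bound of order $|z|^2$ is needed; here the identity enters. The key point is that $w=-1$ is not a nonnegative real number, so with the convention \eqref{Eq_Root} the square root is holomorphic on the open disc $\{|w+1|<1\}$, with $\sqrt{-1}=i$, hence Lipschitz with some constant $L$ on $\{|w+1|\le\tfrac12\}$. Since $w(z)+1=z(z-2ia)$, we have $|w(z)+1|\le|z|(|z|+2a)\le(1+2a)|z|$ for $|z|\le 1$, so one can pick $\rho\in(0,1]$, depending only on $a$, small enough that $|z|\le\rho$ implies both $|w(z)+1|\le\tfrac12$ and $|\sqrt{w(z)}-i|\le L(1+2a)|z|$, and (shrinking $\rho$ once more so that $(L(1+2a)+a)\rho\le 1$)
\[
\big|\sqrt{w(z)}-az+i\big|=\big|2i+(\sqrt{w(z)}-i)-az\big|\ge 2-|\sqrt{w(z)}-i|-a|z|\ge 1 .
\]
Plugging this into the identity yields $|\sqrt{w(z)}+az-i|\le(a^2-1)|z|^2=(a^2-1)\min\{|z|,|z|^2\}$ for $|z|\le\rho$. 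Combining the three regimes, \eqref{Eq_Root_estimate} holds with $C:=\max\{a^2-1,\,2(a+1)/\rho^2\}$.

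The only real obstacle is the small-$|z|$ regime: naively writing $\sqrt{w(z)}+az-i=(\sqrt{w(z)}-i)+az$ and using the Lipschitz estimate for the square root at $-1$ only gives a bound of order $|z|$, which is too weak near $z=0$; the quadratic decay is recovered precisely by passing through the factorization identity, whose second factor $\sqrt{w(z)}-az+i$ tends to $2i$ and hence stays bounded away from $0$ there. One must also check that the branch \eqref{Eq_Root} indeed satisfies $\sqrt{-1}=i$ and is single-valued (holomorphic) on $\{|w+1|<1\}$, which holds because that disc is disjoint from the nonnegative real axis, the branch cut of this square root.
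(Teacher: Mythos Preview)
Your proof is correct and rests on the same key identity as the paper's: the factorization $(\sqrt{w}+az-i)(\sqrt{w}-az+i)=(1-a^2)z^2$, which reduces the quadratic bound near $z=0$ to a lower bound on the second factor. The paper obtains that lower bound \emph{globally}: it observes that $f(z):=\sqrt{w(z)}-az+i$ has no zeros and that $|f(z)|\to\infty$ as $|z|\to\infty$, hence $|f(z)|\ge c>0$ for all $z$, and then combines the resulting estimate $|g(z)|\le(a^2-1)|z|^2/c$ with the crude bound $|g(z)|\le(1+a)(1+|z|)$ in just two regimes ($|z|\le 1$ and $|z|\ge 1$). You instead establish the lower bound only \emph{locally} near $z=0$, via the Lipschitz continuity of the holomorphic branch of $\sqrt{\cdot}$ at $w=-1$, and compensate by splitting into three regimes with a separate crude bound on the intermediate annulus $\rho\le|z|\le 1$. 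Both executions are fine; the paper's is a bit slicker, while yours is more explicit about constants and sidesteps any question of how the branch discontinuity of $\sqrt{\cdot}$ interacts with the global compactness argument.
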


\begin{proof}
In the first step we consider the function $f(z)\coloneqq\sqrt{z^2-2iaz-1}-az+i$. Assume that $f(z_0)=0$ for some $z_0\in\mathbb{C}$, that is,
\begin{equation*}
\sqrt{z_0^2-2iaz_0-1}=az_0-i.
\end{equation*}
Squaring both sides leads to $z_0^2=a^2z_0^2$ and since obviously $z_0\neq 0$ by the choice of the square root in \eqref{Eq_Root}, this is a contradiction to $a>1$. Hence $f$ does not have any zeros. Since $a>1$ by assumption 
we also conclude
\begin{equation*}
\lim\limits_{|z|\rightarrow\infty}|f(z)|=\infty,
\end{equation*}
and consequently there exists $c>0$ such that
\begin{equation}\label{Eq_Root_estimate_3}
|f(z)|\geq c,\quad z\in\mathbb{C}.
\end{equation}
In the second step we consider $g(z)\coloneqq\sqrt{z^2-2iaz-1}+az-i$. Using \eqref{Eq_Root_estimate_3}, this function can be estimated by
\begin{equation}\label{Eq_Root_estimate_1}
|g(z)|=\Big|\frac{g(z)f(z)}{f(z)}\Big|=\Big|\frac{(1-a^2)z^2}{f(z)}\Big|\leq\frac{(a^2-1)|z|^2}{c},\quad z\in\mathbb{C}.
\end{equation}
Moreover, $g$ can also be estimated by
\begin{equation}\label{Eq_Root_estimate_2}
|g(z)|\leq\sqrt{(|z|+a)^2}+a|z|+1=(1+a)(1+|z|),\quad z\in\mathbb{C}.
\end{equation}
Using now \eqref{Eq_Root_estimate_1} for $|z|\leq 1$ and \eqref{Eq_Root_estimate_2} for $|z|\geq 1$ this implies the estimate \eqref{Eq_Root_estimate}.
\end{proof}

\begin{satz}\label{satz_Fdelta_sinc}
For every $a>1$ and $\delta>0$ the functions $F_\delta$ in \eqref{Eq_Fdelta_sinc} belong to the space 
$\mathcal{A}_1(\mathbb{C})$ and converge as
\begin{equation}\label{Eq_Fdelta_sinc_convergence}
\lim\limits_{\delta\rightarrow 0^+}F_\delta(z)=e^{iaz}\quad\text{in }\mathcal{A}_1(\mathbb{C}).
\end{equation}
\end{satz}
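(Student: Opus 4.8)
The plan is to reduce the claim to the $\sinc$-estimate in Lemma~\ref{lem_Root_estimate} and elementary bounds on $\sinc$ on the complex plane. First I would set $w_\delta(z)\coloneqq\sqrt{z^2-\frac{2iaz}{\delta}-\frac{1}{\delta^2}}$ and observe, by factoring out $\frac{1}{\delta}$ under the root (which is compatible with the branch choice \eqref{Eq_Root} since $\frac{1}{\delta}>0$), that $w_\delta(z)=\frac{1}{\delta}\sqrt{(\delta z)^2-2ia(\delta z)-1}$. Hence, writing $\zeta=\delta z$ and applying Lemma~\ref{lem_Root_estimate}, one gets
\begin{equation*}
\Big|w_\delta(z)+az-\frac{i}{\delta}\Big|=\frac{1}{\delta}\big|\sqrt{\zeta^2-2ia\zeta-1}+a\zeta-i\big|\leq\frac{C}{\delta}\min\{|\delta z|,|\delta z|^2\}\leq C\min\{|z|,\delta|z|^2\}.
\end{equation*}
So $w_\delta(z)$ is, up to a controlled error, equal to $\frac{i}{\delta}-az$.

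Next I would rewrite $F_\delta(z)=\frac{2}{\delta}e^{-1/\delta}\sinc(w_\delta(z))=\frac{2}{\delta}e^{-1/\delta}\,\frac{\sin(w_\delta(z))}{w_\delta(z)}$, and use $\sin(p)=\frac{e^{ip}-e^{-ip}}{2i}$ together with $p=w_\delta(z)\approx\frac{i}{\delta}-az$. The dominant term comes from $e^{-ip}\approx e^{-i(i/\delta-az)}=e^{1/\delta}e^{iaz}$, whose prefactor $\frac{2}{\delta}e^{-1/\delta}\cdot\frac{1}{2i}$ times $\frac{1}{w_\delta(z)}$ must be shown to tend to $1$; here $\frac{1}{\delta}\cdot\frac{1}{w_\delta(z)}\to i$ as $\delta\to 0^+$ because $w_\delta(z)=\frac{i}{\delta}(1+o(1))$ uniformly on bounded sets, but since we need $\mathcal{A}_1$-convergence I would keep explicit $\delta$-dependent and $z$-dependent error bounds rather than pointwise limits. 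The subdominant term $e^{ip}\approx e^{-1/\delta}e^{-iaz}$ contributes a factor $\frac{2}{\delta}e^{-1/\delta}\cdot e^{-1/\delta}=\frac{2}{\delta}e^{-2/\delta}$, which vanishes superpolynomially. The strategy is therefore: (i) expand $F_\delta(z)-e^{iaz}$ into a main error coming from replacing $w_\delta(z)$ by $\frac{i}{\delta}-az$ inside $e^{-iw_\delta(z)}$ and $\frac{1}{w_\delta(z)}$, plus (ii) the exponentially small contribution from $e^{iw_\delta(z)}$; then bound each piece by $(\text{something}\to 0)\cdot e^{B|z|}$ for a fixed $B$.

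For the error analysis I would use $|e^{-iw_\delta(z)}-e^{1/\delta+iaz}|=e^{1/\delta}|e^{-i(w_\delta(z)-(i/\delta-az))+iaz}-e^{iaz}|$, and apply the elementary estimate $|e^{\alpha}-1|\leq|\alpha|e^{|\alpha|}$ with $\alpha=-i(w_\delta(z)-\frac{i}{\delta}+az)$, whose modulus is $\le C\min\{|z|,\delta|z|^2\}$ by the bound above. One also needs a lower bound $|w_\delta(z)|\geq c/\delta$ valid uniformly in $z$ and small $\delta$ — this follows from $|w_\delta(z)+az-\frac{i}{\delta}|\le C|z|$ combined with a direct lower bound on $\big|\frac{i}{\delta}-az\big|$ for, say, $\delta|z|$ small, and from the rougher estimate $|w_\delta(z)|^2=|(z-\frac{ia}{\delta})^2+\frac{a^2-1}{\delta^2}|$ in the remaining regime; honestly, establishing this uniform lower bound and packaging all the $\min\{|z|,\delta|z|^2\}$ factors against the $e^{|\Im w_\delta(z)|}$ growth so that the supremum of $|F_\delta(z)-e^{iaz}|e^{-B|z|}$ over all of $\mathbb{C}$ genuinely goes to $0$ is the main obstacle. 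The key technical point making it work is that $|\Im w_\delta(z)|\leq\frac{1}{\delta}+C|z|$ (from the bound on $w_\delta(z)-(\frac{i}{\delta}-az)$), so $\frac{2}{\delta}e^{-1/\delta}|\sinc(w_\delta(z))|\leq\frac{2}{\delta}e^{-1/\delta}\cdot\frac{c\delta\, e^{1/\delta+C|z|}}{1}$, i.e. everything is controlled by $e^{C|z|}$ with the offending $e^{1/\delta}$ cancelled; and along the way the membership $F_\delta\in\mathcal{A}_1(\mathbb{C})$ — entire because $\sinc$ is entire and $w_\delta(z)^2$ is a polynomial in $z$, exponentially bounded by the same computation — follows as a byproduct. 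Finally I would collect the two error terms, note both right-hand sides are $z$-free after dividing by $e^{B|z|}$ (for $B=C$ from Lemma~\ref{lem_Root_estimate}, possibly enlarged by $1$ via $|z|\le e^{|z|}$), and conclude \eqref{Eq_Fdelta_sinc_convergence} from \eqref{Eq_A1_convergence}.
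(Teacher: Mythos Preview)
Your overall strategy --- approximate $w_\delta(z)$ by $\frac{i}{\delta}-az$ via Lemma~\ref{lem_Root_estimate} and show the resulting error in $\sinc$ vanishes in $\mathcal{A}_1$ --- is the same as the paper's. The gap is in the execution. You decompose $\sinc(w_\delta(z))=\frac{e^{iw_\delta(z)}-e^{-iw_\delta(z)}}{2iw_\delta(z)}$ and then claim a uniform lower bound $|w_\delta(z)|\geq c/\delta$. That bound is simply false: $w_\delta$ vanishes at $z=\frac{i(a\pm\sqrt{a^2-1})}{\delta}$, so for every $\delta>0$ there are points with $w_\delta(z)=0$, and your proposed case split (the ``rougher estimate'' $|w_\delta(z)|^2=|(z-\frac{ia}{\delta})^2+\frac{a^2-1}{\delta^2}|$ for $\delta|z|$ not small) cannot yield a positive lower bound in their vicinity. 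Since these zeros sit at $|z|\sim 1/\delta$, the error factor $\min\{|z|,\delta|z|^2\}$ is itself of order $1/\delta$ there, so the region cannot be dismissed as negligible either.

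The paper avoids ever dividing by $w_\delta$. It uses the representation $\sinc(\xi)=\frac{1}{2}\int_{-1}^1 e^{it\xi}\,dt$, valid for all $\xi\in\mathbb{C}$, and compares $F_\delta$ with the intermediate function $G_\delta(z)\coloneqq\frac{2}{\delta}e^{-1/\delta}\sinc\big(\frac{i}{\delta}-az\big)$. The difference $F_\delta-G_\delta$ becomes $\frac{1}{\delta}e^{-1/\delta}\int_{-1}^1\big(e^{itw_\delta(z)}-e^{it(i/\delta-az)}\big)dt$, to which Lemma~\ref{lem_Exponential_estimate} and Lemma~\ref{lem_Root_estimate} apply directly; the prefactor $\frac{1}{\delta}$ is absorbed not by a lower bound on $|w_\delta|$ but by $\int_{-1}^1 e^{|t|/\delta}\,dt=2\delta(e^{1/\delta}-1)$. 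The remaining step $G_\delta\to e^{iaz}$ is handled by one integration by parts in the same integral formula. Your argument becomes correct if you replace the $\sin/w$ splitting by this integral representation of $\sinc$.
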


\begin{proof}
For this proof it will be convenient to use the notation
\begin{equation}\label{Eq_Rdelta}
R_\delta(z)\coloneqq\sqrt{z^2-\frac{2iaz}{\delta}-\frac{1}{\delta^2}},\quad z\in\mathbb{C},
\end{equation}
so that we can write $F_\delta(z)=\frac{2}{\delta}e^{-\frac{1}{\delta}}\sinc(R_\delta(z))$. Although the complex square root in \eqref{Eq_Rdelta} is not entire, in the series expansion of the $\sinc$-function only even powers appear and we still end up with the entire function
\begin{equation*}
F_\delta(z)=\frac{2}{\delta}e^{-\frac{1}{\delta}}\sum\limits_{n=0}^\infty\frac{(-1)^n}{(2n+1)!}R_\delta(z)^{2n}=\frac{2}{\delta}e^{-\frac{1}{\delta}}\sum\limits_{n=0}^\infty\frac{(-1)^n}{(2n+1)!}\Big(z^2-\frac{2iaz}{\delta}-\frac{1}{\delta^2}\Big)^n.
\end{equation*}
For $\delta>0$ fixed and $z\in\mathbb{C}$ sufficiently large we have $1\leq|R_\delta(z)|\leq 2|z|$ and hence the exponential bound
\begin{equation*}
|F_\delta(z)|=\frac{2}{\delta}e^{-\frac{1}{\delta}}\frac{|\sin(R_\delta(z))|}{|R_\delta(z)|}\leq\frac{1}{\delta}e^{-\frac{1}{\delta}}\big|e^{iR_\delta(z)}-e^{-iR_\delta(z)}\big|\leq\frac{2}{\delta}e^{-\frac{1}{\delta}}e^{2|z|}
\end{equation*}
holds for all $z\in\mathbb{C}$ sufficiently large. It follows that $F_\delta\in\mathcal{A}_1(\mathbb{C})$.

\medskip

In order to show that $F_\delta$ converges as in \eqref{Eq_Fdelta_sinc_convergence} we first estimate the difference between $F_\delta(z)$ and $G_\delta(z)\coloneqq\frac{2}{\delta}e^{-\frac{1}{\delta}}\sinc(\frac{i}{\delta}-az)$. Using the identity $\sinc(\xi)=\frac{1}{2}\int_{-1}^1e^{it\xi}dt$, $\xi\in\mathbb{C}$, we find
\begin{equation*}
F_\delta(z)-G_\delta(z)=\frac{1}{\delta}e^{-\frac{1}{\delta}}\int_{-1}^1\big(e^{itR_\delta(z)}-e^{it(\frac{i}{\delta}-az)}\big)dt,
\end{equation*}
and with the help of \eqref{Eq_Exponential_estimate} we obtain
\begin{equation}\label{Eq_Fdelta_Gdelta_estimate}
\begin{split}
|F_\delta(z)-G_\delta(z)|&\leq\frac{1}{\delta}e^{-\frac{1}{\delta}}\int_{-1}^1|t|\Big|R_\delta(z)+az-\frac{i}{\delta}\Big|e^{|t|\max\{|R_\delta(z)|,|\frac{i}{\delta}-az|\}}dt  \\
&\leq\frac{1}{\delta}e^{-\frac{1}{\delta}}\int_{-1}^1\Big|R_\delta(z)+az-\frac{i}{\delta}\Big|e^{|t|(|R_\delta(z)+az-\frac{i}{\delta}|+|az-\frac{i}{\delta}|)}dt.
\end{split}
\end{equation}
Since the estimate \eqref{Eq_Root_estimate} translates into
\begin{equation*}
\Big|R_\delta(z)+az-\frac{i}{\delta}\Big|=\frac{1}{\delta}\big|\sqrt{\delta^2z^2-2ia\delta z-1}+a\delta z-i\big|\leq C\min\{|z|,\delta|z|^2\},
\end{equation*}
we can further estimate \eqref{Eq_Fdelta_Gdelta_estimate} in the form
\begin{equation}\label{Eq_Fdelta_estimate_1}
\begin{split}
|F_\delta(z)-G_\delta(z)|&\leq C|z|^2e^{-\frac{1}{\delta}}\int_{-1}^1e^{|t|(C|z|+a|z|+\frac{1}{\delta})}dt\\
&\leq C|z|^2e^{(C+a)|z|}e^{-\frac{1}{\delta}}\int_{-1}^1e^{\frac{|t|}{\delta}}dt  \\
&=2C\delta|z|^2e^{(C+a)|z|}e^{-\frac{1}{\delta}}(e^{\frac{1}{\delta}}-1)\\
& \leq 2C\delta|z|^2e^{(C+a)|z|};
\end{split}
\end{equation}
in the last inequality we used $|z|^2\leq 2e^{|z|}$, which follows from the power series representation of the exponential.

\medskip

In a second step we estimate the difference between $G_\delta(z)$ and $e^{iaz}$. For this, we again use $\sinc(\xi)=\frac{1}{2}\int_{-1}^1e^{it\xi}dt$, $\xi\in\mathbb{C}$, as well as integration by parts to rewrite $G_\delta$ as
\begin{align*}
G_\delta(z)&=\frac{1}{\delta}e^{-\frac{1}{\delta}}\int_{-1}^1e^{it(\frac{i}{\delta}-az)}dt\\
&=-e^{-\frac{1}{\delta}}\int_{-1}^1\Big(\frac{d}{dt}e^{-\frac{t}{\delta}}\Big)e^{-iatz}dt \\
&=e^{iaz}-e^{-\frac{2}{\delta}}e^{-iaz}-iaze^{-\frac{1}{\delta}}\int_{-1}^1e^{-\frac{t}{\delta}}e^{-iatz}dt.
\end{align*}
This representation allows the estimate
\begin{equation}\label{Eq_Fdelta_estimate_2}
\begin{split}
|G_\delta(z)-e^{iaz}|&\leq\Big(e^{-\frac{2}{\delta}}+a|z|e^{-\frac{1}{\delta}}\int_{-1}^1e^{-\frac{t}{\delta}}dt\Big)e^{a|z|}\\
&=\big(e^{-\frac{2}{\delta}}+a\delta|z|(1-e^{-\frac{2}{\delta}})\big)e^{a|z|} \\
&\leq(e^{-\frac{2}{\delta}}+a\delta|z|)e^{a|z|}\\
&\leq(e^{-\frac{2}{\delta}}+a\delta)e^{(a+1)|z|}, 
\end{split}
\end{equation}
where in the last inequality we used $|z|\leq e^{|z|}$, which follows from the power series representation of the exponential.

\medskip

Combining now \eqref{Eq_Fdelta_estimate_1} and \eqref{Eq_Fdelta_estimate_2} leads to the estimate
\begin{equation*}
|F_\delta(z)-e^{iaz}|e^{-(C+a+1)|z|}\leq 4C\delta+e^{-\frac{2}{\delta}}+a\delta,\quad z\in\mathbb{C},
\end{equation*}
and since the right hand side is $z$-independent and converges to zero as $\delta\rightarrow 0^+$, this shows 
$\lim_{\delta\rightarrow 0^+}F_\delta(z)=e^{iaz}$ in $\mathcal{A}_1(\mathbb{C})$.
\end{proof}

The convergence result of Theorem~\ref{satz_Fdelta_sinc} together with the integral representation of the 
$\sinc$-function 
in Lemma~\ref{lem_Integral_representation_of_sinc} imply the superoscillatory property 
of the functions $F_\delta$ in \eqref{Eq_Fdelta_sinc}.

\begin{cor}
For every $a>1$ the functions $F_\delta$ in \eqref{Eq_Fdelta_sinc} are superoscillating with limit\begin{equation}\label{Eq_Fdelta_cor_convergence}
\lim\limits_{\delta\rightarrow 0^+}F_\delta(z)=e^{iaz}\quad\text{in }\mathcal{A}_1(\mathbb{C}).
\end{equation}
\end{cor}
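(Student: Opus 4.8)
The plan is to combine the two ingredients that the section has already assembled: Theorem~\ref{satz_Fdelta_sinc}, which gives the $\mathcal{A}_1(\mathbb{C})$-convergence $F_\delta\to e^{iaz}$, and Lemma~\ref{lem_Integral_representation_of_sinc}, which (once stated) will provide an integral representation of the $\sinc$-function suitable for rewriting $F_\delta$ in the form \eqref{Eq_Superoscillating_integral}. The convergence part of Definition~\ref{defi_Superoscillation} is then immediate; what remains is to exhibit a common maximal frequency $k_0>0$, complex Borel measures $\mu_\delta$ on $[-k_0,k_0]$, and to check that the limit frequency $a$ lies outside $[-k_0,k_0]$.

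First I would recall that $a>1$, so the natural choice is $k_0=1$, and indeed $a\in\mathbb{R}\setminus[-1,1]$, which is precisely the spectral gap required. Next I would use the integral representation from Lemma~\ref{lem_Integral_representation_of_sinc}: morally, $\sinc$ applied to the specific argument $R_\delta(z)=\sqrt{z^2-\frac{2iaz}{\delta}-\frac{1}{\delta^2}}$ can be written as $\int_{-1}^{1}e^{ikz}\,\rho_\delta(k)\,dk$ for an explicit (and $\delta$-dependent) density $\rho_\delta$ supported on $[-1,1]$ — this is exactly the kind of statement Lemma~\ref{lem_Integral_representation_of_sinc} should supply, presumably obtained by expanding $\sinc$ as $\frac12\int_{-1}^1 e^{it\xi}\,dt$ and then performing the substitution that turns the $R_\delta$-dependence into a plane-wave integral over $k\in[-1,1]$. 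Multiplying by the prefactor $\frac{2}{\delta}e^{-1/\delta}$, I would set $d\mu_\delta(k)\coloneqq \frac{2}{\delta}e^{-1/\delta}\rho_\delta(k)\,dk$, a complex Borel measure on $[-1,1]$ with finite total variation for each fixed $\delta>0$, and conclude $F_\delta(z)=\int_{-1}^{1}e^{ikz}\,d\mu_\delta(k)$, which is the required form \eqref{Eq_Superoscillating_integral}.

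Having both the representation and the convergence, I would then simply invoke Definition~\ref{defi_Superoscillation}: the family $(F_\delta)_{\delta>0}$ has maximal frequency $k_0=1$, admits the integral representation just established, and satisfies $\lim_{\delta\to 0^+}F_\delta(z)=e^{iaz}$ in $\mathcal{A}_1(\mathbb{C})$ by Theorem~\ref{satz_Fdelta_sinc} with $a\in\mathbb{R}\setminus[-1,1]$; hence the $F_\delta$ are superoscillating, and \eqref{Eq_Fdelta_cor_convergence} is nothing but a restatement of \eqref{Eq_Fdelta_sinc_convergence}. A small technical point worth addressing is that Definition~\ref{defi_Superoscillation} is phrased for a sequence indexed by $n\in\mathbb{N}$, whereas here we have a family indexed by a continuous parameter $\delta>0$; this is harmless, since one may pass to any sequence $\delta_n\to 0^+$, but I would note it for cleanliness.

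The main obstacle is entirely contained in the integral representation of the $\sinc$-function at the argument $R_\delta(z)$ — i.e. in Lemma~\ref{lem_Integral_representation_of_sinc}, which we are entitled to assume. The subtlety there is that $R_\delta$ is a branch of a square root rather than an entire function, so one cannot naively substitute into $\sinc(\xi)=\frac12\int_{-1}^1 e^{it\xi}\,dt$ and change variables without care; the even-power structure of the $\sinc$ series (already exploited in the proof of Theorem~\ref{satz_Fdelta_sinc} to show $F_\delta$ is entire) is what ultimately rescues the representation and makes the resulting density well-defined on $[-1,1]$. Once that lemma is in hand, the corollary itself is a two-line assembly with no further analytic content.
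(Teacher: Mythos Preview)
Your outline is correct and matches the paper's proof: the corollary is indeed just the assembly of Theorem~\ref{satz_Fdelta_sinc} for the $\mathcal{A}_1$-convergence and Lemma~\ref{lem_Integral_representation_of_sinc} for the integral representation, with $k_0=1$. The one concrete step you are missing is how the lemma is actually applied: it is stated for $\sinc(\sqrt{w^2+b^2})$ with density $\tfrac12 J_0(b\sqrt{1-k^2})$ (Bessel, not a naive substitution in $\sinc(\xi)=\tfrac12\int_{-1}^1 e^{it\xi}\,dt$), and one must first complete the square $z^2-\tfrac{2iaz}{\delta}-\tfrac{1}{\delta^2}=(z-\tfrac{ia}{\delta})^2+\tfrac{a^2-1}{\delta^2}$, apply the lemma with $w=z-\tfrac{ia}{\delta}$ and $b=\tfrac{\sqrt{a^2-1}}{\delta}$, and then split $e^{ik(z-ia/\delta)}=e^{ikz}e^{ak/\delta}$ to obtain the explicit density $\rho_\delta(k)=\tfrac{1}{\delta}e^{(ak-1)/\delta}J_0\big(\tfrac{\sqrt{(a^2-1)(1-k^2)}}{\delta}\big)$.
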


\begin{proof}
The convergence \eqref{Eq_Fdelta_cor_convergence} was shown in Theorem~\ref{satz_Fdelta_sinc} and it remains the integral representation \eqref{Eq_Superoscillating_integral}. From \eqref{Eq_Integral_representation_of_sinc} with $z$ and $b$ replaced by $z-\frac{ia}{\delta}$ and 
$\frac{\sqrt{a^2-1}}{\delta}$, respectively, we obtain
\begin{equation*}
F_\delta(z)=\frac{2}{\delta}e^{-\frac{1}{\delta}}\sinc\bigg(\sqrt{z^2-\frac{2iaz}{\delta}-\frac{1}{\delta^2}}\bigg)=\frac{1}{\delta}\int_{-1}^1e^{ikz}e^{\frac{ak-1}{\delta}}J_0\bigg(\frac{\sqrt{(a^2-1)(1-k^2)}}{\delta}\bigg)dk.
\end{equation*}
Using the complex Borel measure
\begin{equation*}
\mu_\delta(B)\coloneqq\frac{1}{\delta}\int_Be^{\frac{ak-1}{\delta}}J_0\bigg(\frac{\sqrt{(a^2-1)(1-k^2)}}{\delta}\bigg)dk
\end{equation*}
for Borel sets $B\subseteq[-1,1]$ we conclude that $F_\delta$ admits the integral representation \eqref{Eq_Superoscillating_integral} with $k_0=1$ and hence the functions $F_\delta$ are superoscillating according to Definition~\ref{defi_Superoscillation}.
\end{proof}

\section{Superoscillations with prescribed frequencies}\label{sec_Superoscillations_with_prescribed_frequencies}

The standard example \eqref{Eq_Example} of a superoscillating function is a linear combination of plane waves
\begin{equation}\label{Eq_Constructing_example}
F_n(z)=\sum\limits_{j=0}^nC_j(n)e^{ik_j(n)z},\quad z\in\mathbb{C},
\end{equation}
with frequencies $k_j(n)=1-\frac{2j}{n}$ and coefficients $C_j(n)=\binom{n}{j}(\frac{1+k}{2})^{n-j}(\frac{1-k}{2})^j$. For a long time it was not clear how many of such superoscillating functions exist. 
In the recent paper \cite{ACSSST21} for any set of given frequencies $k_j(n)\in[-1,1]$ and any target frequency $a\in\mathbb{R}\setminus[-1,1]$ coefficients $C_j(n)$ were constructed, such that the corresponding sequence \eqref{Eq_Constructing_example} approximates the exponential $e^{iax}$ in the sense that the Taylor series 
of $F_n(x)$ and $e^{iax}$ coincide up to order $n$. This approximation suggests a certain 
superoscillatory behaviour, but is not sufficient to conclude the convergence $F_n(z)\rightarrow e^{iaz}$ 
in $\mathcal{A}_1(\mathbb{C})$. 
The analysis in this section is partly inspired by these earlier considerations in \cite{ACSSST21}. 
In particular, in Corollary~\ref{cor_Fn_exponentials} we conclude the $\mathcal{A}_1$-convergence
under sufficient conditions on the frequencies $k_j(n)$.

\medskip

Starting from a more general perspective we shall study functions of the form
\begin{equation}\label{Eq_Fn2}
F_n(z)=\sum\limits_{j=0}^nC_j(n)f_{j,n}(z),\quad z\in\mathbb{C},
\end{equation}
with frequency functions
\begin{equation}\label{Eq_fjn2}
f_{j,n}(z)=\int_{-k_0}^{k_0}e^{ikz}d\mu_{j,n}(k),\quad z\in\mathbb{C}.
\end{equation}
The idea is now to choose the coefficients $C_j(n)$ in such a way that the Taylor coefficients of $F_n(z)$ and $e^{iaz}$ coincide up to order $n$, that is,
\begin{equation}\label{Eq_Fn_assumption}
F_n^{(l)}(0)=(ia)^l,\quad n\in\mathbb{N},\,l\in\{0,\dots,n\},
\end{equation}
or, more explicitly using \eqref{Eq_fjn2} and \eqref{Eq_Fn2} 
\begin{equation*}
\sum\limits_{j=0}^nC_j(n)\int_{-k_0}^{k_0}(ik)^ld\mu_{j,n}(k)=(ia)^l,\quad n\in\mathbb{N},\,l\in\{0,\dots,n\}.
\end{equation*}
For each $n\in\mathbb{N}$ this is a linear system of the form
\begin{equation}\label{Eq_Sn_system}
S_n\mathbf{c}_n=\mathbf{a}_n,\quad n\in\mathbb{N},
\end{equation}
where the matrix $S_n\in\mathbb{C}^{(n+1)\times(n+1)}$ and the vectors 
$\mathbf{c}_n,\mathbf{a}_n\in\mathbb{C}^{n+1}$ are given by
\begin{equation}\label{Eq_Sn}
S_n=\begin{pmatrix} \int_{-k_0}^{k_0}k^0d\mu_{0,n}(k) & \dots & \int_{-k_0}^{k_0}k^0d\mu_{n,n}(k) \\ \vdots & \ddots & \vdots \\ \int_{-k_0}^{k_0}k^nd\mu_{0,n}(k) & \dots & \int_{-k_0}^{k_0}k^nd\mu_{n,n}(k) \end{pmatrix},\quad\mathbf{c}_n=\begin{pmatrix} C_0(n) \\ \vdots \\ C_n(n) \end{pmatrix},\quad\mathbf{a}_n=\begin{pmatrix} a^0 \\ \vdots \\ a^n \end{pmatrix}.
\end{equation}
The following Theorem~\ref{satz_Fn_convergence} is of abstract nature and provides conditions such that 
the functions $F_n$ in \eqref{Eq_Fn2} are superoscillating. Using point measures $\mu_{j,n}$ in Corollary~\ref{cor_Fn_exponentials} we complete the earlier considerations in \cite{ACSSST21}.
A different situation based on the choice of absolutely continuous measures is treated in Corollary~\ref{cor_Fn_derivative_convergence}.

\begin{satz}\label{satz_Fn_convergence}
Let $k_0>0$, $a\in\mathbb{R}\setminus[-k_0,k_0]$, $\mu_{j,n}$ be complex Borel measures on $[-k_0,k_0]$, 
and let $\mathbf{c}_n$ be a solution of the system \eqref{Eq_Sn_system} for $n\in\mathbb{N}$. 
If there exist constants $\kappa_1,\kappa_2\geq 0$ such that for every $n\in\mathbb{N}$ and $j\in\{0,\dots,n\}$
\begin{equation}\label{Eq_mu_C_assumption}
|\mu_{j,n}|([-k_0,k_0])\leq\kappa_1^n\quad\text{and}\quad|C_j(n)|\leq\kappa_2^n
\end{equation}
hold, then the functions $F_n$ in \eqref{Eq_Fn2} are superoscillating with limit
\begin{equation}\label{limlim3}
\lim_{n\rightarrow\infty}F_n(z)=e^{iaz}\quad\text{in }\mathcal{A}_1(\mathbb{C}).
\end{equation}
\end{satz}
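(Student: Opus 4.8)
The plan is to show the two defining properties of superoscillations separately: the integral representation \eqref{Eq_Superoscillating_integral} and the $\mathcal{A}_1$-convergence \eqref{Eq_Superoscillating_convergence}. The integral representation is immediate: since each $f_{j,n}$ has the form \eqref{Eq_fjn2}, we set $\mu_n \coloneqq \sum_{j=0}^n C_j(n)\,\mu_{j,n}$, which is again a complex Borel measure on $[-k_0,k_0]$, and then $F_n(z)=\int_{-k_0}^{k_0}e^{ikz}\,d\mu_n(k)$ by linearity. Moreover, the total variation bound in \eqref{Eq_mu_C_assumption} gives $|\mu_n|([-k_0,k_0])\leq (n+1)(\kappa_1\kappa_2)^n$. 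The substance of the argument is therefore the convergence, and the natural route is to compare $F_n$ with the partial sums of the Taylor series of $e^{iaz}$.

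The key step is to estimate $|F_n(z)-e^{iaz}|e^{-B|z|}$ for a suitable $B$ independent of $n$. Write $e^{iaz}=\sum_{l=0}^{n}\frac{(ia)^l}{l!}z^l + R_n(z)$ with the Taylor remainder $R_n(z)=\sum_{l>n}\frac{(ia)^l}{l!}z^l$. By the moment condition \eqref{Eq_Fn_assumption}, the function $F_n(z)-\sum_{l=0}^n\frac{(ia)^l}{l!}z^l$ has vanishing derivatives at $0$ up to order $n$, so its Taylor expansion starts at order $n+1$; explicitly, using $F_n(z)=\int_{-k_0}^{k_0}e^{ikz}\,d\mu_n(k)$,
\begin{equation*}
F_n(z)-\sum_{l=0}^n\frac{(ia)^l}{l!}z^l = \sum_{l>n}\frac{z^l}{l!}\int_{-k_0}^{k_0}(ik)^l\,d\mu_n(k).
\end{equation*}
Hence
\begin{equation*}
|F_n(z)-e^{iaz}| \leq \sum_{l>n}\frac{|z|^l}{l!}\Big(|a|^l + k_0^l\,|\mu_n|([-k_0,k_0])\Big)
\leq \big(1+(n+1)(\kappa_1\kappa_2)^n\big)\sum_{l>n}\frac{(\max\{|a|,k_0\}|z|)^l}{l!}.
\end{equation*}
It remains to absorb the growing prefactor and the tail of the exponential series into a single estimate of the form $C_n\,e^{B|z|}$ with $C_n\to 0$. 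For this I would fix any $B > e\cdot\max\{|a|,k_0\}$ and use the elementary bound $\sum_{l>n}\frac{r^l|z|^l}{l!}\leq e^{B|z|}\sup_{l>n}\frac{r^l}{l!}B^{-l}\cdot(\text{geometric factor})$; more cleanly, since $\frac{r^l}{l!}\leq \big(\frac{er}{l}\big)^l$, for $l>n$ with $n$ large we have $\frac{r^l}{l!}\leq q^l$ with $q\coloneqq \frac{er}{n}<1$ eventually, giving $\sum_{l>n}\frac{r^l|z|^l}{l!}\leq \sum_{l>n}(q|z|)^l/\!\sim$, and then comparing with $e^{|z|}$ the tail is bounded by $\varepsilon_n e^{|z|}$ with $\varepsilon_n\to 0$ superexponentially — fast enough to kill the $(n+1)(\kappa_1\kappa_2)^n$ factor. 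Concretely, bounding $\frac{1}{l!}\le \frac{1}{(n+1)!}\,\frac{1}{(l-n-1)!}$ for $l\ge n+1$ yields
\begin{equation*}
\sum_{l>n}\frac{(r|z|)^l}{l!}\le \frac{(r|z|)^{n+1}}{(n+1)!}\,e^{r|z|}\le \frac{r^{n+1}}{(n+1)!}\,e^{(r+1)|z|},
\end{equation*}
using $|z|^{n+1}\le (n+1)!\,e^{|z|}$ from the power series of the exponential. Thus with $B\coloneqq r+1$ where $r=\max\{|a|,k_0\}$,
\begin{equation*}
\sup_{z\in\mathbb{C}}|F_n(z)-e^{iaz}|\,e^{-B|z|}\le \big(1+(n+1)(\kappa_1\kappa_2)^n\big)\,\frac{r^{n+1}}{(n+1)!},
\end{equation*}
and since $(n+1)!$ grows faster than any exponential, the right-hand side tends to $0$; this is exactly \eqref{limlim3} in the sense of \eqref{Eq_A1_convergence}.

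The main obstacle is purely bookkeeping: one has to choose the exponential rate $B$ uniformly in $n$ and then verify that the factorial in the Taylor tail genuinely dominates the product $(\kappa_1\kappa_2)^n$ of the two geometric growth bounds from \eqref{Eq_mu_C_assumption}. Once the telescoping bound $|z|^{n+1}\le (n+1)!\,e^{|z|}$ is invoked to trade the polynomial-in-$z$ factor for an extra $e^{|z|}$, everything collapses to the statement that $\frac{(\kappa_1\kappa_2)^n r^{n+1}}{(n+1)!}\to 0$, which is clear. No compactness or functional-analytic input beyond finiteness of total variation of complex measures is needed, and the holomorphy and membership of each $F_n$ in $\mathcal{A}_1(\mathbb{C})$ were already recorded after Definition~\ref{defi_Superoscillation}.
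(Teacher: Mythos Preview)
Your overall strategy is sound --- the Taylor-tail comparison is the right idea, and the paper proceeds along the same lines --- but the displayed inequality
\[
\frac{(r|z|)^{n+1}}{(n+1)!}\,e^{r|z|}\le \frac{r^{n+1}}{(n+1)!}\,e^{(r+1)|z|}
\]
is false. Dividing both sides by $r^{n+1}/(n+1)!$ it says $|z|^{n+1}\le e^{|z|}$, which fails already for moderate $|z|$ once $n\ge 2$. What the cited bound $|z|^{n+1}\le (n+1)!\,e^{|z|}$ actually gives is $\frac{(r|z|)^{n+1}}{(n+1)!}\le r^{n+1}e^{|z|}$, so the factorial \emph{cancels} and you are left with $r^{n+1}e^{(r+1)|z|}$. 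Your final supremum bound then becomes $\big(1+(n+1)(\kappa_1\kappa_2)^n\big)\,r^{n+1}$, which diverges; the proof as written does not close.

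The repair is cheap, because what you need is geometric (not factorial) decay of the tail, and one free parameter delivers it. For any $B>r$,
\[
\sum_{l>n}\frac{(r|z|)^l}{l!}=\sum_{l>n}\Big(\frac{r}{B}\Big)^l\frac{(B|z|)^l}{l!}\le\Big(\frac{r}{B}\Big)^{n+1}e^{B|z|},
\]
and choosing $B>r\max\{1,\kappa_1\kappa_2\}$ makes $(n+1)(\kappa_1\kappa_2)^n(r/B)^{n+1}\to 0$. This is precisely what the paper arranges, though via a different device: it bounds the Taylor coefficients of $F_n-e^{iaz}$ by the Cauchy integral formula on the circle of radius $(1+\kappa_1\kappa_2)|z|$, which manufactures the factor $(1+\kappa_1\kappa_2)^{-l}$ in the tail sum and yields the same geometric suppression. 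Your direct moment estimate $|F_n^{(l)}(0)|\le k_0^{\,l}\,|\mu_n|([-k_0,k_0])$ is in fact cleaner than the Cauchy-formula detour; only the final summation step needs the adjustment above.
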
 

\begin{proof}
First of all, it is clear that $f_{j,n}\in\mathcal{A}_1(\mathbb{C})$ for every $j\in\{0,\dots,n\}$ and hence $F_n\in\mathcal{A}_1(\mathbb{C})$ by the arguments given below Definition \ref{defi_Superoscillation}. The bounds of the measures $\mu_{j,n}$ and the coefficients $C_j(n)$ in \eqref{Eq_mu_C_assumption} immediately lead to the estimate
\begin{equation}\label{Eq_Fn_bound}
|F_n(z)|\leq\sum\limits_{j=0}^n|C_j(n)|\int_{-k_0}^{k_0}e^{|k||z|}d|\mu_{j,n}|(k)\leq(n+1)(\kappa_1\kappa_2)^ne^{k_0|z|},\quad z\in\mathbb{C}.
\end{equation}
Next, we write the difference between $F_n(z)$ and $e^{iaz}$ as the Taylor series
\begin{equation}\label{Eq_Difference_Taylorseries}
F_n(z)-e^{iaz}=\sum\limits_{l=0}^\infty\frac{F_n^{(l)}(0)-(ia)^l}{l!}z^l=\sum\limits_{l=n+1}^\infty\frac{F_n^{(l)}(0)-(ia)^l}{l!}z^l,\quad z\in\mathbb{C},
\end{equation}
where in the second equality we used \eqref{Eq_Fn_assumption}. For $z\in\mathbb{C}\setminus\{0\}$ the Cauchy integral formula along a circle of radius $(1+\kappa_1\kappa_2)|z|$ leads to the following bounds for the coefficients in
\eqref{Eq_Difference_Taylorseries}
\begin{align*}
\frac{|F_n^{(l)}(0)-(ia)^l|}{l!}&=\bigg|\frac{1}{2\pi i}\int_{|\xi|=(1+\kappa_1\kappa_2)|z|}\frac{F_n(\xi)-e^{ia\xi}}{\xi^{l+1}}d\xi\bigg| \\
&\leq\frac{1}{2\pi(1+\kappa_1\kappa_2)^l|z|^l}\int_0^{2\pi}\Big|F_n\big((1+\kappa_1\kappa_2)|z|e^{i\varphi}\big)-e^{ia(1+\kappa_1\kappa_2)|z|e^{i\varphi}}\Big|d\varphi \\
&\leq\frac{1}{2\pi(1+\kappa_1\kappa_2)^l|z|^l}\int_0^{2\pi}\Big((n+1)(\kappa_1\kappa_2)^ne^{k_0(1+\kappa_1\kappa_2)|z|}+e^{|a|(1+\kappa_1\kappa_2)|z|}\Big)d\varphi \\
&\leq\frac{(n+1)(\kappa_1\kappa_2)^n+1}{(1+\kappa_1\kappa_2)^l|z|^l}e^{|a|(1+\kappa_1\kappa_2)|z|},
\end{align*}
where we used \eqref{Eq_Fn_bound} and $k_0<|a|$ in the estimate of the integrand. Plugging this into the Taylor series \eqref{Eq_Difference_Taylorseries} gives
\begin{equation}\label{Eq_Difference_estimate}
\begin{split}
|F_n(z)-e^{iaz}|&\leq\big((n+1)(\kappa_1\kappa_2)^n+1\big)e^{|a|(1+\kappa_1\kappa_2)|z|}\sum\limits_{l=n+1}^\infty\frac{1}{(1+\kappa_1\kappa_2)^l}  \\
&=\frac{(n+1)(\kappa_1\kappa_2)^n+1}{\kappa_1\kappa_2(1+\kappa_1\kappa_2)^n}e^{|a|(1+\kappa_1\kappa_2)|z|},\quad z\in\mathbb{C}\setminus\{0\}. 
\end{split}
\end{equation}
Due to \eqref{Eq_Difference_Taylorseries} this inequality obviously holds for $z=0$ and hence we conclude 
the $\mathcal{A}_1$-convergence
\begin{equation*}
\sup\limits_{z\in\mathbb{C}}|F_n(z)-e^{iaz}|e^{-|a|(1+\kappa_1\kappa_2)|z|}\leq\frac{(n+1)(\kappa_1\kappa_2)^n+1}{\kappa_1\kappa_2(1+\kappa_1\kappa_2)^n}\rightarrow 0,\quad\text{as }n\rightarrow\infty.
\end{equation*}
Finally, the integral representation \eqref{Eq_Superoscillating_integral} of the functions $F_n$ is satisfied by
\begin{equation*}
F_n(z)=\int_{-k_0}^{k_0}e^{ikz}\sum\limits_{j=0}^nC_j(n)\mu_{j,n}(k)=\int_{-k_0}^{k_0}e^{ikz}d\mu_n(k),\quad z\in\mathbb{C},
\end{equation*}
using the Borel measure $\mu_n(B)\coloneqq\sum_{j=0}^nC_j(n)\mu_{j,n}(B)$ for Borel sets $B\subseteq [-k_0,k_0]$. 
We have shown that the functions $F_n$ are superoscillating with limit \eqref{limlim3}.
\end{proof}

In the next corollary we return to the initial problem of this section. The aim is to find for given  $k_j(n)$ coefficients $C_j(n)$ in \eqref{Eq_Constructing_example} such that the resulting functions $F_n$ are superoscillating. Here the measures $\mu_{j,n}$ in Theorem~\ref{satz_Fn_convergence} 
are chosen as point measures and an additional condition on the frequencies $k_j(n)$ is imposed.

\begin{cor}\label{cor_Fn_exponentials}
Let $k_0>0$, $k_j(n)\in[-k_0,k_0]$, and assume that 
\begin{equation}\label{Eq_Frequency_distance}
\prod\limits_{l=0,l\neq j}^n|k_l(n)-k_j(n)|\geq\kappa^n,\quad n\in\mathbb{N},\,j\in\{0,\dots,n\},
\end{equation}
holds for some $\kappa>0$. For $a\in\mathbb{R}\setminus[-k_0,k_0]$ define the coefficients
\begin{equation}\label{Eq_Cjn}
C_j(n)=\prod\limits_{l=0,l\neq j}^n\frac{k_l(n)-a}{k_l(n)-k_j(n)},\quad n\in\mathbb{N},\,j\in\{0,\dots,n\}.
\end{equation}
Then the functions
\begin{equation}\label{Eq_Fn_exponentials}
F_n(z)\coloneqq\sum_{j=0}^nC_j(n)e^{ik_j(n)z},\quad z\in\mathbb{C},
\end{equation}
are superoscillating with limit $
\lim_{n\rightarrow\infty}F_n(z)=e^{iaz}$ in $\mathcal{A}_1(\mathbb{C})$.
\end{cor}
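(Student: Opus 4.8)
The plan is to deduce the corollary directly from the abstract Theorem~\ref{satz_Fn_convergence} by taking the measures $\mu_{j,n}$ there to be the unit point masses $\delta_{k_j(n)}$. With this choice one has $f_{j,n}(z)=\int_{-k_0}^{k_0}e^{ikz}\,d\delta_{k_j(n)}(k)=e^{ik_j(n)z}$, so that \eqref{Eq_Fn2} becomes exactly \eqref{Eq_Fn_exponentials}, and $|\mu_{j,n}|([-k_0,k_0])=1$ for all $j,n$, so the first bound in \eqref{Eq_mu_C_assumption} holds trivially with $\kappa_1=1$. What remains, therefore, is to check two things: that the vector $\mathbf{c}_n$ with entries \eqref{Eq_Cjn} really solves the linear system \eqref{Eq_Sn_system}, and that the $C_j(n)$ satisfy a bound $|C_j(n)|\le\kappa_2^n$ as required in \eqref{Eq_mu_C_assumption}.

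For the solvability part I would argue via Lagrange interpolation. With point measures the matrix $S_n$ in \eqref{Eq_Sn} is the Vandermonde matrix $(S_n)_{l,j}=k_j(n)^l$, $l,j\in\{0,\dots,n\}$, and since \eqref{Eq_Frequency_distance} forces the nodes $k_0(n),\dots,k_n(n)$ to be pairwise distinct, $S_n$ is invertible. Writing out \eqref{Eq_Fn_assumption} with $F_n$ as in \eqref{Eq_Fn_exponentials} gives $i^l\sum_{j=0}^nC_j(n)k_j(n)^l=(ia)^l$, i.e. $\sum_{j=0}^nC_j(n)k_j(n)^l=a^l$ for $l\in\{0,\dots,n\}$, which is precisely \eqref{Eq_Sn_system}. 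Now the Lagrange basis polynomials $L_j(x)=\prod_{l\neq j}\frac{x-k_l(n)}{k_j(n)-k_l(n)}$ satisfy $\sum_{j=0}^n k_j(n)^l\,L_j(x)=x^l$ for every $l\in\{0,\dots,n\}$ (both sides are polynomials of degree $\le n$ agreeing at the $n+1$ distinct nodes), and evaluating at $x=a$ yields $\sum_{j=0}^n k_j(n)^l\,L_j(a)=a^l$. Since $L_j(a)=\prod_{l\neq j}\frac{a-k_l(n)}{k_j(n)-k_l(n)}=\prod_{l\neq j}\frac{k_l(n)-a}{k_l(n)-k_j(n)}=C_j(n)$, the coefficients in \eqref{Eq_Cjn} are exactly the (unique) solution of \eqref{Eq_Sn_system}.

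For the bound on $C_j(n)$ I would estimate the numerator and denominator of \eqref{Eq_Cjn} separately: each factor in the numerator satisfies $|k_l(n)-a|\le|k_l(n)|+|a|\le k_0+|a|$, and there are $n$ of them, while the denominator is bounded below by $\kappa^n$ thanks to assumption \eqref{Eq_Frequency_distance}. Hence $|C_j(n)|\le\big(\tfrac{k_0+|a|}{\kappa}\big)^n=:\kappa_2^n$, so \eqref{Eq_mu_C_assumption} holds with $\kappa_1=1$ and $\kappa_2=\tfrac{k_0+|a|}{\kappa}$, and Theorem~\ref{satz_Fn_convergence} applies, giving that the $F_n$ in \eqref{Eq_Fn_exponentials} are superoscillating with limit $e^{iaz}$ in $\mathcal{A}_1(\mathbb{C})$.

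I do not expect a genuine obstacle here: the substance is entirely the reduction to Theorem~\ref{satz_Fn_convergence}, and the only place requiring care is the bookkeeping that identifies $L_j(a)$ with $C_j(n)$ (the two sign flips in numerator and denominator) and the recognition that \eqref{Eq_Fn_assumption}, specialized to point measures, is precisely the Vandermonde system solved by Lagrange interpolation evaluated at $a$. It is worth emphasizing that hypothesis \eqref{Eq_Frequency_distance} is used \emph{twice} — qualitatively, to ensure the nodes are distinct and the system is uniquely solvable, and quantitatively, to produce the exponential bound on the $C_j(n)$ demanded by \eqref{Eq_mu_C_assumption}.
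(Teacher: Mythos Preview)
Your proposal is correct and follows essentially the same route as the paper: choose the $\mu_{j,n}$ as unit point masses at $k_j(n)$, identify the system \eqref{Eq_Sn_system} with a Vandermonde system, verify that the $C_j(n)$ in \eqref{Eq_Cjn} solve it, bound $|C_j(n)|\le\big(\tfrac{k_0+|a|}{\kappa}\big)^n$ using \eqref{Eq_Frequency_distance}, and invoke Theorem~\ref{satz_Fn_convergence}. The only cosmetic difference is that the paper appeals to Cramer's rule and the Vandermonde determinant formula (citing \cite[Theorem~2.2]{ACSSST21}) to obtain the explicit solution, whereas you verify it directly via Lagrange interpolation---these are of course two phrasings of the same computation.
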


\begin{proof}
Observe first that the functions \eqref{Eq_Fn_exponentials} can be written in the form \eqref{Eq_Fn2} 
using the point measures
\begin{equation*}
\mu_{j,n}(B)\coloneqq\begin{cases} 1, & \text{if }k_j(n)\in B, \\ 0, & \text{if }k_j(n)\notin B, \end{cases}
\end{equation*}
for Borel sets $B\subseteq[-k_0,k_0]$.
Since the first bound in \eqref{Eq_mu_C_assumption} is trivially satisfied, it remains to check that
the (unique) solution $C_j(n)$ of \eqref{Eq_Sn_system} satisfies the second bound
in \eqref{Eq_mu_C_assumption}. The explicit form of the matrix $S_n$ in this particular setting is
\begin{equation*}
S_n=\begin{pmatrix} k_0(n)^0 & \dots & k_n(n)^0 \\ \vdots & \ddots & \vdots \\ k_0(n)^n & \dots & k_n(n)^n \end{pmatrix}.
\end{equation*}
This is a Vandermonde matrix, which is known to be invertible whenever the $(k_j(n))_{j=0}^n$ are pairwise disjoint. But this is obviously the case due to the assumption \eqref{Eq_Frequency_distance}. Moreover, using Cramer's rule and the known determinant formula for Vandermonde matrices, one can easily derive the explicit representation
\eqref{Eq_Cjn} for the coefficients $C_j(n)$
of the solution, see also \cite[Theorem 2.2]{ACSSST21}. The assumption \eqref{Eq_Frequency_distance} now allows to estimate these coefficients as
\begin{equation*}
|C_j(n)|=\prod\limits_{l=0,l\neq j}^n\frac{|k_l(n)-a|}{|k_l(n)-k_j(n)|}\leq\prod\limits_{l=0,l\neq j}^n\frac{k_0+|a|}{|k_l(n)-k_j(n)|}\leq\Big(\frac{k_0+|a|}{\kappa}\Big)^n,
\end{equation*}
which shows the second bound in \eqref{Eq_mu_C_assumption}. It follows from 
Theorem~\ref{satz_Fn_convergence} that the functions $F_n$ are superoscillating.
\end{proof}

\begin{bem}
Note that the frequencies $k_j(n)=1-\frac{2j}{n}$ in \eqref{Eq_Example_coefficients} can be estimated by
\begin{equation*}
\prod\limits_{l=0,l\neq j}^n|k_l(n)-k_j(n)|=\prod\limits_{l=0,l\neq j}^n\frac{2|l-j|}{n}=\frac{2^nj!(n-j)!}{n^n}=\frac{2^nn!}{n^n{n\choose j}}\geq\frac{1}{e^n},
\end{equation*}
where in the last inequality we used ${n\choose j}\leq 2^n$ as well as $\frac{n^n}{n!}\leq e^n$, which is a consequence of the Stirling formula. Hence, Corollary~\ref{cor_Fn_exponentials} applies and the coefficients 
$C_j(n)$ in \eqref{Eq_Cjn} lead to superoscillating functions $F_n$ in \eqref{Eq_Fn_exponentials}. 
Note that these coefficients do not coincide with those in \eqref{Eq_Example_coefficients}.
\end{bem}

Now we turn to a different situation, where the measures $\mu_{j,n}$ in Theorem~\ref{satz_Fn_convergence} are absolutely continuous. Here the initial idea is to replace the exponentials $e^{ik_j(n)z}$ in \eqref{Eq_Constructing_example} by the derivatives of a bandlimited function
\begin{equation}\label{Eq_f_derivative}
f(z)=\int_{-k_0}^{k_0}e^{ikz}h(k)dk,\quad z\in\mathbb{C},
\end{equation}
with $h\in L^1([-k_0,k_0])$;
note that $h$ is the compactly supported Fourier transform of $f$.

\begin{cor}\label{cor_Fn_derivative_convergence}
Let $k_0>0$ and let $h\in L^1([-k_0,k_0])$ 
be a nonnegative function from the Szeg\"o class, i.e. for some $\alpha,\beta\in[-k_0,k_0]$ with $\alpha<\beta$ one has
\begin{equation}\label{Eq_Szego_condition}
\int_\alpha^\beta\frac{\ln(h(k))}{\sqrt{(\beta-k)(k-\alpha)}}dk>-\infty,
\end{equation}
and let $f$ be as in \eqref{Eq_f_derivative}.
Then, for every $a\in\mathbb{R}\setminus[-k_0,k_0]$ there exist coefficients $C_j(n)\in\mathbb{C}$ such that the functions 
\begin{equation}\label{jaja}
F_n(z)=\sum_{j=0}^n C_j(n) f^{(j)}(z),\quad z\in\mathbb{C},
\end{equation}
are superoscillating with limit $
\lim_{n\rightarrow\infty}F_n(z)=e^{iaz}$ in $\mathcal{A}_1(\mathbb{C})$.
\end{cor}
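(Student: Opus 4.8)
The plan is to fit the functions $F_n$ into the framework of Theorem~\ref{satz_Fn_convergence}. Differentiating \eqref{Eq_f_derivative} under the integral sign gives $f^{(j)}(z)=\int_{-k_0}^{k_0}e^{ikz}(ik)^jh(k)\,dk$, so the functions \eqref{jaja} are precisely of the form \eqref{Eq_Fn2}--\eqref{Eq_fjn2} with the ($n$-independent) absolutely continuous measures $d\mu_{j,n}(k)=(ik)^jh(k)\,dk$ on $[-k_0,k_0]$. It then suffices to choose coefficients $C_j(n)$ solving the linear system \eqref{Eq_Sn_system} and to check the two exponential bounds in \eqref{Eq_mu_C_assumption}; the superoscillation property, including the integral representation \eqref{Eq_Superoscillating_integral}, follows at once from Theorem~\ref{satz_Fn_convergence}. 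Throughout I would write $d\nu(k)\coloneqq h(k)\,dk$ for the associated finite positive Borel measure on $[-k_0,k_0]$, and note that \eqref{Eq_Szego_condition} forces $h>0$ a.e.\ on $[\alpha,\beta]$, so that $\operatorname{supp}\nu\supseteq[\alpha,\beta]$ is infinite. The first bound in \eqref{Eq_mu_C_assumption} is then immediate, since $|\mu_{j,n}|([-k_0,k_0])=\int_{-k_0}^{k_0}|k|^jh(k)\,dk\leq\max\{1,k_0\}^n\Vert h\Vert_{L^1}\leq\kappa_1^n$ for a suitable $\kappa_1$.

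For the solution, observe that the entries of $S_n$ are $\int_{-k_0}^{k_0}k^l\,d\mu_{j,n}(k)=i^j m_{l+j}$, where $m_p\coloneqq\int_{-k_0}^{k_0}k^ph(k)\,dk$ are the moments of $\nu$; thus $S_n=H_n\operatorname{diag}(1,i,\dots,i^n)$ with $H_n=(m_{l+j})_{l,j=0}^n$ the Hankel moment matrix. Since $\nu$ has infinite support, $H_n$ is positive definite, hence $S_n$ is invertible and \eqref{Eq_Sn_system} has a unique solution $\mathbf{c}_n$. Writing $(p_m)_{m\geq 0}$ for the orthonormal polynomials of $\nu$, $\gamma_m>0$ for the leading coefficient of $p_m$, and $[p_m]_j$ for the coefficient of $k^j$ in $p_m$, the standard identity $(H_n^{-1})_{l,j}=\sum_{m=0}^n[p_m]_l[p_m]_j$ leads to the explicit formula $C_j(n)=i^{-j}\sum_{m=0}^np_m(a)\,[p_m]_j$; equivalently, the polynomial $P_n(k)\coloneqq\sum_{j=0}^nC_j(n)(ik)^j$ equals the reproducing (Christoffel--Darboux) kernel $K_n(k,a)=\sum_{m=0}^np_m(k)p_m(a)$ of the polynomials of degree $\leq n$ in $L^2(\nu)$, whose reproducing property $\int_{-k_0}^{k_0}k^lK_n(k,a)h(k)\,dk=a^l$ for $0\leq l\leq n$ is exactly the system \eqref{Eq_Sn_system}.

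The remaining step, which I also expect to be the main obstacle, is the second bound in \eqref{Eq_mu_C_assumption}: an exponential bound $|C_j(n)|\leq\kappa_2^n$, where $|C_j(n)|=\big|\sum_{m=0}^np_m(a)[p_m]_j\big|$. Since the zeros of $p_m$ lie in the convex hull $[-k_0,k_0]$ of $\operatorname{supp}\nu$, one has $p_m(k)=\gamma_m\prod_{i=1}^m(k-k_{m,i})$ with $k_{m,i}\in[-k_0,k_0]$; expanding the product yields $|[p_m]_j|\leq\gamma_m2^m\max\{1,k_0\}^m$ and $|p_m(a)|\leq\gamma_m(|a|+k_0)^m$, so everything reduces to an exponential bound $\gamma_m\leq\kappa^m$ on the leading coefficients, and this is precisely where the Szeg\"o hypothesis \eqref{Eq_Szego_condition} does its work. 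To obtain it I would compare $\nu$ with its restriction $\nu'$ to $[\alpha,\beta]$: because $h\geq0$ one has $\nu'\leq\nu$, hence $\Vert q\Vert_{L^2(\nu')}\leq\Vert q\Vert_{L^2(\nu)}$ for every polynomial $q$, so $1/\gamma_m(\nu)=\min_q\Vert q\Vert_{L^2(\nu)}\geq\min_q\Vert q\Vert_{L^2(\nu')}=1/\gamma_m(\nu')$ (minima over monic $q$ of degree $m$), i.e.\ $\gamma_m(\nu)\leq\gamma_m(\nu')$ for all $m$. The measure $\nu'$ lives on $[\alpha,\beta]$ and lies in the Szeg\"o class by \eqref{Eq_Szego_condition}, so by the classical Szeg\"o asymptotics the quantities $\gamma_m(\nu')\big(\tfrac{\beta-\alpha}{4}\big)^m$ converge to a strictly positive limit; hence $\gamma_m(\nu)\leq\gamma_m(\nu')\leq\kappa^m$ for a suitable $\kappa$ and all $m$. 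Combining the three estimates gives $|C_j(n)|\leq(n+1)\big(\kappa^2\cdot2\max\{1,k_0\}(|a|+k_0)\big)^n\leq\kappa_2^n$ for a suitable $\kappa_2$, so Theorem~\ref{satz_Fn_convergence} applies and the $F_n$ are superoscillating with limit $e^{iaz}$ in $\mathcal{A}_1(\mathbb{C})$. (The instance $n=0$ of \eqref{Eq_mu_C_assumption} does not affect the limit and can be absorbed into $\kappa_1,\kappa_2$.)
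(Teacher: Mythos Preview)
Your argument is correct and reaches the same conclusion, but the route to the coefficient bound $|C_j(n)|\leq\kappa_2^n$ differs from the paper's. The paper proceeds more directly at the level of linear algebra: after rewriting the system as $M_n\mathbf{d}_n=\mathbf{a}_n$ with the Hankel moment matrix $(M_n)_{j,l}=\int_{-k_0}^{k_0}k^{j+l}h(k)\,dk$ (your $H_n$), it invokes Szeg\H{o}'s 1936 result that the smallest eigenvalue of the restricted moment matrix $\widetilde{M}_n$ (built from $h|_{[\alpha,\beta]}$) is bounded below by $\kappa^n$, observes $M_n\geq\widetilde{M}_n$ as quadratic forms since $h\geq 0$, and concludes $\Vert M_n^{-1}\Vert\leq\kappa^{-n}$, whence $|C_j(n)|\leq|\mathbf{d}_n|\leq\kappa^{-n}|\mathbf{a}_n|$. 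Your approach instead identifies $\mathbf{d}_n$ explicitly via the Christoffel--Darboux kernel $K_n(\cdot,a)$, then bounds each $|[p_m]_j|$ and $|p_m(a)|$ through the leading coefficients $\gamma_m$, and finally controls $\gamma_m$ by comparing with the restricted measure and applying Szeg\H{o}'s asymptotics for $\gamma_m(\nu')\big(\tfrac{\beta-\alpha}{4}\big)^m$. The paper's argument is shorter and uses only an operator-norm estimate as a black box; yours is more explicit (it actually names the solution polynomial) and uses slightly more orthogonal-polynomial machinery, but the underlying Szeg\H{o} theory is the same, and the two eigenvalue/leading-coefficient estimates are closely related.
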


\begin{proof}
Note that the functions $F_n$ in \eqref{jaja} are of the form \eqref{Eq_Fn2} if we use the measures
\begin{equation}\label{Eq_Moment_measure}
\mu_{j,n}(B)=\int_B(ik)^jh(k)dk
\end{equation}
for Borel sets $B\subseteq[-k_0,k_0]$.
Hence we are in the situation to apply Theorem~\ref{satz_Fn_convergence} but have to check the two estimates \eqref{Eq_mu_C_assumption}. The bound of the absolute variations of the measures is clearly satisfied by
\begin{equation*}
|\mu_{j,n}|([-k_0,k_0])=\int_{-k_0}^{k_0}|k|^j|h(k)|dk\leq k_0^j\int_{-k_0}^{k_0}|h(k)|dk\leq\max\{k_0,1\}^n\Vert h\Vert_{L^1}.
\end{equation*}
To calculate the coefficients $C_j(n)$ we have to solve the system \eqref{Eq_Sn_system} or, equivalently, 
the system $M_n\mathbf{d}_n=\mathbf{a}_n$ with the matrix and the vectors
\begin{equation*}
(M_n)_{j,l}=\int_{-k_0}^{k_0}k^{j+l}h(k)dk,\quad(\mathbf{d}_n)_j=i^jC_j(n),\quad(\mathbf{a}_n)_l=a^l,
\end{equation*}
where we shifted the powers $i^j$ of the imaginary unit from the coefficient matrix into the solution vector. Now consider the matrix
\begin{equation*}
(\widetilde{M}_n)_{j,l}\coloneqq\int_\alpha^\beta k^{j+l}h(k)dk,\quad j,l\in\{0,\dots,n\}.
\end{equation*}
It is shown in \cite{S36} that its lowest eigenvalue $\widetilde{\lambda}_n=\inf_{0\neq x\in\mathbb{C}^n}\frac{\langle\widetilde{M}_nx,x\rangle}{|x|^2}$ is bounded from below by
\begin{equation*}
\widetilde{\lambda}_n\geq\kappa^n,\quad n\in\mathbb{N},
\end{equation*}
for some $\kappa>0$. Hence $\widetilde{M}_n$ is elliptic with bound $\widetilde{M}_n\geq\kappa^n$ for every $n\in\mathbb{N}$. For $x\in\mathbb{C}^{n+1}$ the estimate
\begin{equation*}
\langle M_nx,x\rangle=\int_{-k_0}^{k_0}\Big|\sum\limits_{j=0}^nk^jx_j\Big|^2h(k)dk\geq\int_\alpha^\beta\Big|\sum\limits_{j=0}^nk^jx_j\Big|^2h(k)dk=\langle\widetilde{M}_nx,x\rangle\geq\kappa^n|x|^2,
\end{equation*}
shows that the same is true for the matrix $M_n$. Consequently, the inverse matrix exists and is bounded by $\Vert M_n^{-1}\Vert\leq\frac{1}{\kappa^n}$. Finally we can estimate the elements $C_j(n)$ of the solution by
\begin{equation*}
|C_j(n)|\leq|\mathbf{d}_n|=|M_n^{-1}\mathbf{a}_n|\leq\frac{1}{\kappa^n}|\mathbf{a}_n|\leq\frac{\sqrt{n+1}\max\{a^n,1\}}{\kappa^n}.
\end{equation*}
Hence, also the second bound in \eqref{Eq_mu_C_assumption} is satisfied and the assertions follow from
Theorem~\ref{satz_Fn_convergence}.
\end{proof}

In the following example we make a special choice of the function $h$ in \eqref{Eq_f_derivative} in order to find a close connection between the functions \eqref{jaja} and the method presented in the paragraph below equation \eqref{Eq_Ferreira_Kempf_Lee}.

\begin{bsp}\label{bsp_Ferreira_Kempf_Lee}
In this example we apply Corollary \ref{cor_Fn_derivative_convergence} with $k_0=1$ and the constant function $h(k)=\frac{1}{2}$. The exact integral
\begin{equation*}
\int_{-1}^1\frac{\ln(\frac{1}{2})}{\sqrt{(1-k)(k+1)}}dk=-\pi\ln(2)>-\infty
\end{equation*}
shows that the condition \eqref{Eq_Szego_condition} is satisfied in this setting. Moreover, the function $f$ in \eqref{Eq_f_derivative} is given by
\begin{equation*}
f(z)=\frac{1}{2}\int_{-1}^1e^{ikz}dk=\sinc(z),\quad z\in\mathbb{C},
\end{equation*}
which means that the functions $F_n$ in \eqref{jaja} become the sum of derivatives of the $\sinc$-function
\begin{equation}\label{Eq_Sum_of_sinc}
F_n(z)=\sum\limits_{j=0}^nC_j(n)\sinc^{(j)}(z),\quad z\in\mathbb{C}.
\end{equation}
This example is of particular importance, since these functions are closely related to the method of P. Ferreira, A. Kempf and D. Lee
in the series of papers \cite{FK02,FK06,FKR07,FL14_1,FL14_2}.
There functions of the form \eqref{Eq_Ferreira_Kempf_Lee} were considered, with coefficients $(c_l)_l$ chosen in such a way that at prescribed points $(x_l)_l$ the function $F$ admits the prescribed values $F(x_l)=a_l$, $l\in\{0,\dots,m\}$. Furthermore, in the paper \cite{FL14} this construction is extended in the sense that also the values of the derivative $F'$ can be prescribed and it is then straight forward to also prescribe the derivatives up to any order $n$, i.e. $F^{(j)}(x_{j,l})=a_{j,l}$, $l\in\{0,\dots,m\}$, $j\in\{0,\dots,n\}$. This then leads to a function $F$ of the form
\begin{equation*}
F(x)=\sum\limits_{l=0}^mc_{0,l}\sinc(x-x_{0,l})+\sum\limits_{l=0}^mc_{1,l}\sinc'(x-x_{1,l})+\dots+\sum\limits_{l=0}^mc_{n,l}\sinc^{(n)}(x-x_{n,l}).
\end{equation*}
In the special case that we choose $m=0$, the points $x_{0,0}=x_{1,0}=\dots=x_{n,0}=0$ as the origin and the values $a_{j,0}=(ia)^j$, we end up with exactly the function \eqref{Eq_Sum_of_sinc}, namely
\begin{equation*}
F(x)=c_{0,0}\sinc(x)+c_{1,0}\sinc'(x)+\dots+c_{n,0}\sinc^{(n)}(x).
\end{equation*}
\end{bsp}

\section{Superoscillating solutions of the generalized Schr\"odinger equation}\label{sec_Superoscillating_solutions_of_the_generalized_Schroedinger_equation}

In this final section we derive two methods to construct new families of superoscillating functions out of given ones. The underlying technique has its origin in \cite{ACSST15} and was revisited in \cite{ACSST16,ACSST17_1,ASTY18}; it is based on the generalized force free Schr\"odinger equation
\begin{equation}\label{Eq_Generalized_Schroedinger_equation}
i\frac{\partial}{\partial t}\Psi(t,z)=-H\Big(-i\frac{\partial}{\partial z}\Big)\Psi(t,z),\quad t,z\in\mathbb{C}, 
\end{equation}
where for some entire $H(z)=\sum_{l=0}^\infty h_lz^l$ the operator $H(-i\frac{d}{dz})\coloneqq\sum_{l=0}^\infty h_l(-i)^l\frac{d^l}{dz^l}$ is defined as the corresponding infinite order differential expression. If we now consider superoscillating initial conditions $\Psi_n(0,z)=F_n(z)$ of the form \eqref{Eq_Superoscillating_integral}, then
the equation \eqref{Eq_Generalized_Schroedinger_equation} admits the explicit solution
\begin{equation}\label{Eq_Generalized_Schroedinger_solution}
\Psi_n(t,z)=\int_{-k_0}^{k_0}e^{iH(k)t}e^{ikz}d\mu_n(k),\quad t,z\in\mathbb{C};
\end{equation}
cf. Lemma~\ref{lem_U_continuity}.
One may expect that the solutions $\Psi_n(t,\,\cdot\,)$ are superoscillatory also for $t\not=0$. This method is specified in Theorem~\ref{satz_Constructing_Fn}~(i). Moreover, it turns out that in the point $z=0$ the functions $\Psi_n(\,\cdot\,,0)$ are superoscillatory in the time variable $t$ as well, see Theorem~\ref{satz_Constructing_Fn}~(ii).

\medskip

In the next lemma we collect some properties of the propagator of the generalized Schr\"odinger equation \eqref{Eq_Generalized_Schroedinger_equation}.

\begin{lem}\label{lem_U_continuity}
Let $H:\mathbb{C}\rightarrow\mathbb{C}$ be an entire function. For every $t\in\mathbb{C}$ the operator
\begin{equation}\label{Eq_U_H}
U_t\coloneqq\sum\limits_{m=0}^\infty\frac{(it)^m}{m!}H\Big(-i\frac{d}{dz}\Big)^m,
\end{equation}
is continuous as an operator $U_t:\mathcal{A}_1(\mathbb{C})\rightarrow\mathcal{A}_1(\mathbb{C})$ and it acts on plane waves $e^{iaz}$ and functions $F_n(z)$ of the form \eqref{Eq_Superoscillating_integral} as
\begin{equation}\label{Eq_Ut_action}
U_te^{iaz}=e^{iH(a)t}e^{iaz}\quad\text{and}\quad U_tF_n(z)=\int_{-k_0}^{k_0}e^{iH(k)t}e^{ikz}d\mu_n(k),
\end{equation}
respectively. Moreover, for every $F\in\mathcal{A}_1(\mathbb{C})$, the function $\Psi(t,z)=U_tF(z)$, $t,z\in\mathbb{C}$, is a solution of the generalized  Schr\"odinger equation \eqref{Eq_Generalized_Schroedinger_equation} with initial condition $\Psi(0,z)=F(z)$.
\end{lem}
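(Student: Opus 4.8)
The plan is to establish the three assertions in the order: (a) well-definedness and continuity of $U_t$ on $\mathcal{A}_1(\mathbb{C})$, (b) the explicit action formulas \eqref{Eq_Ut_action}, and (c) the fact that $\Psi(t,z)=U_tF(z)$ solves \eqref{Eq_Generalized_Schroedinger_equation}. The key quantitative input for (a) is a bound on how an infinite order differential operator $H(-i\frac{d}{dz})$ acts on a function with exponential growth. Specifically, if $F\in\mathcal{A}_1(\mathbb{C})$ satisfies $|F(z)|\le Ae^{B|z|}$, then by the Cauchy estimates along circles of radius $1/B$ around $z$ one gets $|F^{(l)}(z)|\le A\,l!\,B^l e^{B|z|+1}$ (or a similar clean bound), so that $\big|H(-i\tfrac{d}{dz})F(z)\big|\le\sum_l|h_l|\,|F^{(l)}(z)|\le A e\,e^{B|z|}\sum_l|h_l|l!B^l$. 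Since $H$ is entire, $\sum_l|h_l|l!B^l<\infty$ would generally fail; the correct route is instead to use that the coefficients satisfy $|h_l|\le M_R/R^l$ for every $R>0$ and choose $R$ large relative to $B$, giving a finite sum and a bound of the form $\big|H(-i\tfrac{d}{dz})F(z)\big|\le A' e^{B|z|}$ with $A'$ depending on $B$ and $H$. Iterating $m$ times and summing the series $\sum_m\frac{|t|^m}{m!}(\cdot)$ then yields $|U_tF(z)|\le A''e^{B'|z|}$, which both shows $U_tF\in\mathcal{A}_1(\mathbb{C})$ and, applied to differences $F_n-F_0$, shows continuity with respect to the convergence \eqref{Eq_A1_convergence}. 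I expect this growth bookkeeping — getting the constants to close up under iteration and infinite summation — to be the main technical obstacle; everything else is essentially formal.

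For part (b), the action on plane waves is a direct computation: $H(-i\frac{d}{dz})e^{iaz}=\sum_l h_l(-i)^l(ia)^l e^{iaz}=\sum_l h_l a^l e^{iaz}=H(a)e^{iaz}$, hence $H(-i\frac{d}{dz})^m e^{iaz}=H(a)^m e^{iaz}$, and summing gives $U_te^{iaz}=\sum_m\frac{(it)^m}{m!}H(a)^m e^{iaz}=e^{iH(a)t}e^{iaz}$. For $F_n$ of the form \eqref{Eq_Superoscillating_integral}, I would justify interchanging $U_t$ with the integral $\int_{-k_0}^{k_0}\cdots d\mu_n(k)$ — this is legitimate because $U_t$ is a continuous linear operator, the integral is a limit of finite Riemann-type sums of plane waves (or, more carefully, because differentiation under the integral sign is valid by the dominated convergence argument already invoked below Definition~\ref{defi_Superoscillation}, applied to each order and then to the series) — and then apply the plane-wave formula inside the integral to obtain $U_tF_n(z)=\int_{-k_0}^{k_0}e^{iH(k)t}e^{ikz}d\mu_n(k)$.

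For part (c), with $\Psi(t,z)=U_tF(z)=\sum_m\frac{(it)^m}{m!}H(-i\frac{d}{dz})^mF(z)$, I would differentiate termwise in $t$: $\frac{\partial}{\partial t}\Psi(t,z)=\sum_{m=1}^\infty\frac{(it)^{m-1}i}{(m-1)!}H(-i\frac{d}{dz})^mF(z)=i\,H(-i\frac{d}{dz})\sum_{m=0}^\infty\frac{(it)^m}{m!}H(-i\frac{d}{dz})^mF(z)=i\,H(-i\frac{d}{dz})\Psi(t,z)$, which rearranges to \eqref{Eq_Generalized_Schroedinger_equation}; the termwise differentiation in $t$ and the commuting of $H(-i\frac{d}{dz})$ past the $t$-series are both justified by the locally uniform (in $t$ and $z$) convergence coming from the exponential bounds of part (a). Finally $\Psi(0,z)=U_0F(z)=F(z)$ from the $m=0$ term. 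I would close with a remark that the same bounds show $\Psi$ is jointly entire in $(t,z)$, so all the formal manipulations above are rigorous.
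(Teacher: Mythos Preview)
Your outline for parts (b) and (c) is essentially what the paper does: the action on plane waves is computed directly, the action on $F_n$ is obtained by pushing derivatives and sums through the integral via dominated convergence, and the Schr\"odinger equation is verified by termwise differentiation of the $t$-series. The paper does not prove the continuity statement (a) at all but simply cites \cite[Theorem~2.7]{ACPS21}; your decision to sketch an independent proof is reasonable, but the sketch as written has a real gap.

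The problem is your proposed remedy for the divergent sum $\sum_l|h_l|\,l!\,B^l$. You suggest using $|h_l|\le M_R/R^l$ and taking $R$ large relative to $B$, but this gives $M_R\sum_l l!\,(B/R)^l$, and $\sum_l l!\,x^l$ diverges for every $x>0$; no choice of $R$ rescues this. The factorial comes from applying the Cauchy estimate with a \emph{fixed} radius $r=1/B$. The standard cure is to let the radius depend on $l$: with $r=l/B$ one gets
\[
|F^{(l)}(z)|\le\frac{l!}{(l/B)^l}\,A\,e^{B|z|+l}=A\,\frac{l!\,e^l}{l^l}\,B^l\,e^{B|z|}\le C\sqrt{l}\,B^l\,A\,e^{B|z|},
\]
by Stirling, and then $\sum_l|h_l|\sqrt{l}\,B^l<\infty$ does follow from $H$ being entire. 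This yields $|H(-i\tfrac{d}{dz})F(z)|\le A\,C_H(B)\,e^{B|z|}$ with the \emph{same} exponent $B$, so iteration gives $|H(-i\tfrac{d}{dz})^mF(z)|\le A\,C_H(B)^m e^{B|z|}$ and hence $|U_tF(z)|\le A\,e^{|t|C_H(B)}e^{B|z|}$, from which both well-definedness and continuity follow exactly as you indicate. With this correction your plan goes through and recovers what \cite{ACPS21} proves; the rest of your argument needs no change.
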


\begin{proof}
The fact that the operator $U_t$ is continuous in $\mathcal{A}_1(\mathbb{C})$ was already shown 
in \cite[Theorem~2.7]{ACPS21}. To investigate the action \eqref{Eq_Ut_action} of $U_t$ on plane waves $e^{iaz}$, we start with the power series representation $H(z)=\sum_{l=0}^\infty h_lz^l$. Then the operator $H(-i\frac{d}{dz})$ acts on plane waves $e^{iaz}$ as multiplication
\begin{equation*}
H\Big(-i\frac{d}{dz}\Big)e^{iaz}=\sum\limits_{l=0}^\infty h_l(-i)^l\frac{d^l}{dz^l}e^{iaz}=\sum\limits_{l=0}^\infty h_la^le^{iaz}=H(a)e^{iaz}
\end{equation*}
and hence the operator $U_t$ acts as
\begin{equation}\label{Eq_Ut_action_on_exponentials}
U_te^{iaz}=\sum\limits_{m=0}^\infty\frac{(it)^m}{m!}H\Big(-i\frac{d}{dz}\Big)^me^{iaz}=\sum\limits_{m=0}^\infty\frac{(iH(a)t)^m}{m!}e^{iaz}=e^{iH(a)t}e^{iaz}.
\end{equation}
To determine the action of $U_t$ on functions $F_n(z)=\int_{-k_0}^{k_0}e^{ikz}d\mu_n(k)$ we 
have to interchange the complex derivatives with the integral. In the first step we use a version of the dominated convergence theorem and obtain
\begin{equation*}
\frac{d^l}{dz^l}\int_{-k_0}^{k_0}e^{ikz}d\mu_n(k)=\int_{-k_0}^{k_0}\frac{d^l}{dz^l}e^{ikz}d\mu_n(k).
\end{equation*}
Applying the dominated convergence theorem once more, we are also allowed to carry the infinite sum $\sum_{l=0}^\infty$ of the operator $H(-i\frac{d}{dz})$ inside the integral. This gives
\begin{equation*}
H\Big(-i\frac{d}{dz}\Big)\int_{-k_0}^{k_0}e^{ikz}d\mu_n(k)=\int_{-k_0}^{k_0}H\Big(-i\frac{d}{dz}\Big)e^{ikz}d\mu_n(k).
\end{equation*}
Using the dominated convergence a third time for the series $\sum_{m=0}^\infty$ of the operator $U_t$ finally gives
\begin{equation*}
U_t\int_{-k_0}^{k_0}e^{ikz}d\mu_n(k)=\int_{-k_0}^{k_0}U_te^{ikz}d\mu_n(k)=\int_{-k_0}^{k_0}e^{iH(k)t}e^{ikz}d\mu_n(k),
\end{equation*}
where in the second equation we used \eqref{Eq_Ut_action_on_exponentials}.

\medskip

In the last part of the proof let $F\in\mathcal{A}_1(\mathbb{C})$ and consider $\Psi(t,z)\coloneqq U_tF(z)$, $t,z\in\mathbb{C}$. Interchanging derivatives and sums for similar reasons as above, we obtain that
\begin{align*}
i\frac{\partial}{\partial t}\Psi(t,z)&=i\frac{\partial}{\partial t}\sum\limits_{m=0}^\infty\frac{(it)^m}{m!}H\Big(-i\frac{d}{dz}\Big)^mF(z) \\
&=-\sum\limits_{m=1}^\infty\frac{(it)^{m-1}}{(m-1)!}H\Big(-i\frac{d}{dz}\Big)^mF(z) \\
&=-H\Big(-i\frac{d}{dz}\Big)\sum\limits_{m=0}^\infty\frac{(it)^m}{m!}H\Big(-i\frac{d}{dz}\Big)^mF(z) \\
&=-H\Big(-i\frac{d}{dz}\Big)\Psi(t,z).
\end{align*}
Hence $\Psi(t,z)$ in indeed a solution of the generalized Schr\"odinger equation \eqref{Eq_Generalized_Schroedinger_equation}. Since $U_0=\text{id}$ reduces to the identity operator, the solution also satisfies the initial condition $\Psi(0,z)=F(z)$.
\end{proof}

The novelty of Theorem~\ref{satz_Constructing_Fn} below is that 
arbitrary superoscillating functions of the form \eqref{Eq_Superoscillating_integral} are allowed, while in \cite{ACSST15,ACSST16,ACSST17_1,ASTY18} only the standard example \eqref{Eq_Example} was considered. Furthermore, 
it is shown that the resulting functions $F_n^{(1)}$ and $F_n^{(2)}$ in \eqref{Eq_Constructing_Fn1} and \eqref{Eq_Constructing_Fn2} converge in $\mathcal{A}_1(\mathbb{C})$, while only uniform convergence on compact sets was proven in \cite{ACSST15,ACSST16,ACSST17_1,ASTY18}.

\begin{satz}\label{satz_Constructing_Fn}
Let $k_0>0$, $a\in\mathbb{R}\setminus[-k_0,k_0]$, and $\mu_n$ be complex Borel measures on $[-k_0,k_0]$.
Assume that the functions 
\begin{equation}\label{Eq_Constructing_Fn}
F_n(z)=\int_{-k_0}^{k_0}e^{ikz}d\mu_n(k),\quad z\in\mathbb{C},
\end{equation}
are superoscillating with limit 
$\lim_{n\rightarrow\infty}F_n(z)=e^{iaz}$ in $\mathcal{A}_1(\mathbb{C})$.
Then one can construct the following two new families of superoscillating functions:
\begin{enumerate}
\item[{\rm(i)}] For every entire function $H:\mathbb{C}\rightarrow\mathbb{C}$ the functions
\begin{equation}\label{Eq_Constructing_Fn1}
F_n^{(1)}(z)\coloneqq e^{-H(a)}\int_{-k_0}^{k_0}e^{H(k)}e^{ikz}d\mu_n(k),\quad z\in\mathbb{C},
\end{equation}
are superoscillating with $\lim_{n\rightarrow\infty}F_n^{(1)}(z)=e^{iaz}$ in $\mathcal{A}_1(\mathbb{C})$.

\item[{\rm(ii)}] For every entire function $H:\mathbb{C}\rightarrow\mathbb{C}$ with satisfies $\ran\big(H|_{[-k_0,k_0]}\big)\subseteq[-h_0,h_0]$ and $H(a)\in\mathbb{R}\setminus[-h_0,h_0]$ for some $h_0>0$ the sequence
\begin{equation}\label{Eq_Constructing_Fn2}
F_n^{(2)}(z)\coloneqq\int_{-k_0}^{k_0}e^{iH(k)z}d\mu_n(k),\quad z\in\mathbb{C},
\end{equation}
is superoscillating with $\lim_{n\rightarrow\infty}F_n^{(2)}(z)=e^{iH(a)z}$ in $\mathcal{A}_1(\mathbb{C})$.
\end{enumerate}
\end{satz}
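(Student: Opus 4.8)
The plan is to treat the two parts separately; in each case one first checks that the new functions have the integral form required by Definition~\ref{defi_Superoscillation} and then verifies the $\mathcal{A}_1$-convergence, the main tool being the propagator $U_t$ of Lemma~\ref{lem_U_continuity}.

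For part (i) the key observation is simply that $F_n^{(1)}=e^{-H(a)}\,U_{-i}F_n$. Indeed, evaluating the action formula \eqref{Eq_Ut_action} at $t=-i$ gives $U_{-i}F_n(z)=\int_{-k_0}^{k_0}e^{H(k)}e^{ikz}\,d\mu_n(k)$ and $U_{-i}e^{iaz}=e^{H(a)}e^{iaz}$. Since $U_{-i}$ is continuous on $\mathcal{A}_1(\mathbb{C})$ and multiplication by the scalar $e^{-H(a)}$ preserves convergence in $\mathcal{A}_1(\mathbb{C})$, the hypothesis $F_n\to e^{iaz}$ immediately yields $F_n^{(1)}=e^{-H(a)}U_{-i}F_n\to e^{-H(a)}U_{-i}e^{iaz}=e^{iaz}$ in $\mathcal{A}_1(\mathbb{C})$. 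For the integral representation \eqref{Eq_Superoscillating_integral} one notes that $\widetilde\mu_n(B)\coloneqq e^{-H(a)}\int_B e^{H(k)}\,d\mu_n(k)$ is again a complex Borel measure on $[-k_0,k_0]$ — its total variation is finite because $e^{H(\cdot)}$ is continuous, hence bounded, on the compact interval $[-k_0,k_0]$ — and $F_n^{(1)}(z)=\int_{-k_0}^{k_0}e^{ikz}\,d\widetilde\mu_n(k)$; since $a\in\mathbb{R}\setminus[-k_0,k_0]$, this is exactly the setting of Definition~\ref{defi_Superoscillation}.

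For part (ii) the integral representation is obtained by transporting the measures: set $\nu_n(B)\coloneqq\mu_n\big(\{k\in[-k_0,k_0]:H(k)\in B\}\big)$ for Borel sets $B\subseteq[-h_0,h_0]$, which is meaningful because $\ran(H|_{[-k_0,k_0]})\subseteq[-h_0,h_0]$, and the change-of-variables formula gives $F_n^{(2)}(z)=\int_{-h_0}^{h_0}e^{ihz}\,d\nu_n(h)$, a function of the form \eqref{Eq_Superoscillating_integral} with maximal frequency $h_0$; since $H(a)\in\mathbb{R}\setminus[-h_0,h_0]$ by hypothesis, the limiting frequency lies outside the admissible band. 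For the convergence the idea is to read $F_n^{(2)}$ as the propagated datum evaluated at the origin: applying \eqref{Eq_Ut_action} with $z$ now playing the role of the time variable, one has $F_n^{(2)}(z)=(U_zF_n)(0)$ and likewise $e^{iH(a)z}=(U_ze^{ia\,\cdot\,})(0)$, so that by linearity $F_n^{(2)}(z)-e^{iH(a)z}=\big(U_z(F_n-e^{ia\,\cdot\,})\big)(0)$. It then remains to prove that the right-hand side, as a function of $z$, tends to $0$ in $\mathcal{A}_1(\mathbb{C})$. Writing $D_n\coloneqq F_n-e^{ia\,\cdot\,}$ and using that the $\mathcal{A}_1$-convergence of $F_n$ provides a fixed $B\geq0$ with $\sup_{w}|D_n(w)|e^{-B|w|}=\varepsilon_n\to0$, I would expand $U_z=\sum_{m\geq0}\frac{(iz)^m}{m!}H(-i\frac{d}{dw})^m$ and evaluate at $w=0$. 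Cauchy's inequality bounds the $l$-th Taylor coefficient of $D_n$ by $\varepsilon_n(eB)^l/l!$, and an elementary iteration — applying $H(-i\frac{d}{dw})$ to a function whose $l$-th Taylor coefficient is at most $A\gamma^l/l!$ produces one whose $l$-th coefficient is at most $A\big(\sum_{p}|h_p|\gamma^p\big)\gamma^l/l!$, the constant $\sum_p|h_p|\gamma^p$ being finite because $H$ is entire — gives $\big|H(-i\frac{d}{dw})^mD_n(0)\big|\leq\varepsilon_n R^m$ with $R\coloneqq\sum_{p}|h_p|(eB)^p$. Summing in $m$ then yields $\big|(U_zD_n)(0)\big|\leq\varepsilon_n e^{R|z|}$ for all $z\in\mathbb{C}$, which is precisely the $\mathcal{A}_1$-convergence $F_n^{(2)}\to e^{iH(a)z}$.

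The two integral representations and part (i) are routine; the substance of the argument is the estimate in part (ii). The delicate point there is that one cannot merely invoke the continuity of $U_t$ for a fixed $t$, as that only gives $\mathcal{A}_1$-convergence of $z\mapsto(U_zD_n)(0)$ for each fixed $z$ separately. What is needed is the quantitative, $z$-uniform mapping property — an $\mathcal{A}_1$-function bounded by $Ae^{B|w|}$ is sent by $U_z$ to a function whose value at the origin is bounded by $Ae^{R|z|}$ with $R$ depending only on $H$ and $B$, not on $A$ — which is the refinement (in the spirit of \cite[Theorem~2.7]{ACPS21}) that upgrades the mere compact-convergence results of \cite{ACSST15,ACSST16,ACSST17_1,ASTY18} to convergence in $\mathcal{A}_1(\mathbb{C})$.
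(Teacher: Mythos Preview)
Your proof is correct and follows the same route as the paper's: part~(i) is handled via the continuity of $U_{-i}$ and the action formulas \eqref{Eq_Ut_action}, and part~(ii) recognises $F_n^{(2)}(z)=(U_zF_n)(0)$ and relies on a bound of the form $|U_z(F_n-e^{ia\,\cdot\,})(0)|\leq \varepsilon_n e^{R|z|}$ uniform in $z$. The only difference is that the paper quotes this last estimate from \cite[Theorem~2.7]{ACPS21}, whereas you supply a self-contained derivation via Cauchy's inequalities and the iteration of $H(-i\tfrac{d}{dw})$ on Taylor coefficients --- a nice touch, and your computation (Taylor coefficient bound $\varepsilon_n(eB)^l/l!$, then multiplication by $R=\sum_p|h_p|(eB)^p$ at each application of $H(-i\tfrac{d}{dw})$) is exactly right.
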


\begin{proof}
For the proof of (i) we consider the operator \eqref{Eq_U_H} with $t=-i$. From the action of this operator in \eqref{Eq_Ut_action} we obtain
\begin{equation*}
U_{-i}F_n(z)=\int_{-k_0}^{k_0}e^{H(k)}e^{ikz}d\mu_n(k)=e^{H(a)}F_n^{(1)}(z),\quad z\in\mathbb{C}.
\end{equation*}
Since $U_{-i}$ is continuous in $\mathcal{A}_1(\mathbb{C})$ due to Lemma~\ref{lem_U_continuity} we conclude the convergence
\begin{equation}\label{Eq_Fn1_convergence}
\lim\limits_{n\rightarrow\infty}F_n^{(1)}(z)=e^{-H(a)}\lim\limits_{n\rightarrow\infty}U_{-i}F_n(z)=e^{-H(a)}U_{-i}e^{iaz}=e^{iaz}\quad\text{in }\mathcal{A}_1(\mathbb{C}),
\end{equation}
where the first identity in \eqref{Eq_Ut_action} was used in the last step.
The fact that $F_n^{(1)}$ is of the form \eqref{Eq_Superoscillating_integral} is clear using the complex Borel measure
\begin{equation*}
\sigma_n(B)=e^{-H(a)}\int_BH(k)d\mu_n(k)
\end{equation*}
for Borel sets $B\subseteq[-k_0,k_0]$.
Hence, the functions $F_n^{(1)}$ are superoscillating with $\lim_{n\rightarrow\infty}F_n^{(1)}(z)=e^{iaz}$ in $\mathcal{A}_1(\mathbb{C})$.

\medskip

(ii)\;\;According to the $\mathcal{A}_1$-convergence \eqref{Eq_Superoscillating_convergence}, there exists some $B\geq 0$, such that
\begin{equation*}
A_n\coloneqq\sup\limits_{z\in\mathbb{C}}|F_n(z)-e^{iaz}|e^{-B|z|}\rightarrow 0,\quad\text{as }n\rightarrow\infty.
\end{equation*}
By the choice of the constants $A_n$, we obtain the estimate
\begin{equation*}
|F_n(z)-e^{iaz}|\leq A_ne^{B|z|},\quad z\in\mathbb{C}.
\end{equation*}
In the proof of \cite[Theorem 2.7]{ACPS21} the authors obtain the estimate
\begin{equation*}
|U_t(F_n(z)-e^{iaz})|\leq A_ne^{\widetilde{D}|t|}e^{\widetilde{B}|z|},\quad t,z\in\mathbb{C},
\end{equation*}
for some $\widetilde{B},\widetilde{D}\geq 0$. For $z=0$ this estimate shows the convergence
\begin{equation*}
\lim\limits_{n\rightarrow\infty}\sup\limits_{t\in\mathbb{C}}\big|U_t(F_n(z)-e^{iaz})|_{z=0}\big|e^{-\widetilde{D}|t|}\leq\lim\limits_{n\rightarrow\infty}A_n=0.
\end{equation*}
In other words, we conclude the $\mathcal{A}_1$-convergence
\begin{equation*}
\lim\limits_{n\rightarrow\infty}U_tF_n(z)|_{z=0}=U_te^{iaz}|_{z=0}
\end{equation*}
in the variable $t$. Using the identities \eqref{Eq_Ut_action} this convergence can be written as
\begin{equation}\label{Eq_Fn2_convergence}
\lim\limits_{n\rightarrow\infty}F_n^{(2)}(t)=\lim\limits_{n\rightarrow\infty}\int_{-k_0}^{k_0}e^{iH(k)t}d\mu_n(k)=\lim\limits_{n\rightarrow\infty}U_tF_n(z)|_{z=0}=U_te^{iaz}|_{z=0}=e^{iH(a)t}.
\end{equation}
Since $\ran\big(H|_{[-k_0,k_0]}\big)\subseteq[-h_0,h_0]$ by assumption, we can choose the complex Borel measures
\begin{equation*}
\sigma_n(B)\coloneqq\mu_n\big(\Set{k\in[-k_0,k_0] | H(k)\in B}\big)
\end{equation*}
for Borel sets $B\subseteq[-h_0,h_0]$ and 
transform $F_n^{(2)}$ into the form 
\begin{equation}\label{Eq_Constructing_Fnhhh}
F_n^{(2)}(t)=\int_{-h_0}^{h_0}e^{ikt}d\sigma_n(k),\quad z\in\mathbb{C},
\end{equation}
which is the 
representation \eqref{Eq_Superoscillating_integral}. Together with the convergence \eqref{Eq_Fn2_convergence} to a plane wave with frequency $H(a)\in\mathbb{R}\setminus[-h_0,h_0]$ this shows that the functions $F_n^{(2)}$ are superoscillating.
\end{proof}

\appendix

\section{}

This appendix contains two technical results which are used in Section~\ref{sec_Berry_superoscillations} and Section~\ref{sec_Superoscillating_sinc_function}. In Lemma~\ref{Eq_Exponential_estimate} we prove an elementary estimate for the difference of two complex exponentials and in Lemma~\ref{lem_Integral_representation_of_sinc} we derive an integral representation of the $\sinc$-function.

\begin{lem}\label{lem_Exponential_estimate}
For every $z_1,z_2\in\mathbb{C}$ one has
\begin{equation}\label{Eq_Exponential_estimate}
|e^{z_1}-e^{z_2}|\leq|z_1-z_2|e^{\max\{|z_1|,|z_2|\}}.
\end{equation}
\end{lem}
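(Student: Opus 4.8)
The plan is to prove the estimate
\begin{equation*}
|e^{z_1}-e^{z_2}|\leq|z_1-z_2|e^{\max\{|z_1|,|z_2|\}}
\end{equation*}
by writing the difference of the two exponentials as an integral of the derivative along the straight line segment joining $z_2$ to $z_1$. Concretely, for $t\in[0,1]$ set $\gamma(t)\coloneqq z_2+t(z_1-z_2)$, so that $\gamma(0)=z_2$, $\gamma(1)=z_1$, and $\gamma'(t)=z_1-z_2$. By the fundamental theorem of calculus applied to $t\mapsto e^{\gamma(t)}$ one has
\begin{equation*}
e^{z_1}-e^{z_2}=\int_0^1\frac{d}{dt}e^{\gamma(t)}\,dt=(z_1-z_2)\int_0^1e^{\gamma(t)}\,dt.
\end{equation*}

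From here I would take absolute values and pull the constant $|z_1-z_2|$ out of the integral, leaving
\begin{equation*}
|e^{z_1}-e^{z_2}|\leq|z_1-z_2|\int_0^1|e^{\gamma(t)}|\,dt=|z_1-z_2|\int_0^1e^{\Re\gamma(t)}\,dt.
\end{equation*}
The remaining step is to bound the integrand uniformly in $t$. Since $\Re\gamma(t)\leq|\gamma(t)|$ and $\gamma(t)$ is a convex combination of $z_1$ and $z_2$, the triangle inequality gives $|\gamma(t)|\leq(1-t)|z_2|+t|z_1|\leq\max\{|z_1|,|z_2|\}$ for every $t\in[0,1]$. Hence $e^{\Re\gamma(t)}\leq e^{\max\{|z_1|,|z_2|\}}$, and integrating this constant bound over $[0,1]$ yields exactly the claimed inequality.

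There is no real obstacle here; the only point requiring a moment's care is the convexity estimate $|\gamma(t)|\leq\max\{|z_1|,|z_2|\}$, which follows because the modulus is a convex function and the segment $[z_2,z_1]$ lies in the closed disc of radius $\max\{|z_1|,|z_2|\}$ centered at the origin. An entirely equivalent route would be to expand $e^{z_1}-e^{z_2}=\sum_{n\geq 1}\frac{z_1^n-z_2^n}{n!}$, use the factorization $z_1^n-z_2^n=(z_1-z_2)\sum_{k=0}^{n-1}z_1^k z_2^{n-1-k}$ together with $|z_1^k z_2^{n-1-k}|\leq\max\{|z_1|,|z_2|\}^{n-1}$, and resum; but the integral argument is cleaner and avoids any rearrangement bookkeeping.
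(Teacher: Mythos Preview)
Your argument is correct and considerably more streamlined than the paper's. The paper proceeds by first establishing the special case
\begin{equation*}
|1-e^{z}|\leq|z|\,e^{|\Re(z)|},\quad z\in\mathbb{C},
\end{equation*}
via explicit real-variable manipulations (writing $z=x+iy$, using $\cos(y)\geq 1-\tfrac{y^2}{2}$ and $|1-e^x|\leq|x|e^{|x|}$), and then reduces the general statement to this by factoring $e^{z_1}-e^{z_2}=e^{z_1}(1-e^{z_2-z_1})$ and distinguishing the cases $\Re(z_1)\lessgtr\Re(z_2)$. Your integral representation along the segment $\gamma(t)=z_2+t(z_1-z_2)$ together with the convexity bound $|\gamma(t)|\leq\max\{|z_1|,|z_2|\}$ bypasses all of this in one stroke. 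The only thing the paper's route buys is the slightly sharper intermediate inequality with $e^{|\Re(z)|}$ in place of $e^{|z|}$, but this refinement is immediately discarded in the final step anyway and plays no role in the applications in Sections~\ref{sec_Berry_superoscillations} and~\ref{sec_Superoscillating_sinc_function}.
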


\begin{proof}
First we write for every $z=x+iy\in\mathbb{C}$
\begin{equation*}
|1-e^z|^2=(1-\cos(y)e^x)^2+\sin^2(y)e^{2x}=1-2\cos(y)e^x+e^{2x}.
\end{equation*}
Using $\cos(y)\geq 1-\frac{y^2}{2}$ we can estimate this by
\begin{equation*}
|1-e^z|^2\leq(1-e^x)^2+y^2e^x.
\end{equation*}
Since $|1-e^x|\leq|x|e^{|x|}$, which follows directly from the power series representation of the exponential, the above inequality becomes
\begin{equation}\label{Eq_Exponential_estimate_1}
|1-e^z|^2\leq x^2e^{2|x|}+y^2e^x\leq(x^2+y^2)e^{2|x|}=|z|^2e^{2|\Re(z)|},\quad z\in\mathbb{C}.
\end{equation}
Next consider $z_1,z_2\in\mathbb{C}$ and assume $\Re(z_1)\leq\Re(z_2)$. Then
\eqref{Eq_Exponential_estimate_1} implies
\begin{equation*}
|e^{z_1}-e^{z_2}|=e^{\Re(z_1)}|1-e^{z_2-z_1}|\leq e^{\Re(z_1)}|z_2-z_1|e^{\Re(z_2-z_1)}=|z_2-z_1|e^{\Re(z_2)}\leq|z_2-z_1|e^{|z_2|},
\end{equation*}
and in the same way for $\Re(z_2)\leq\Re(z_1)$ one obtains $|e^{z_1}-e^{z_2}|\leq |z_1-z_2|e^{|z_1|}$.
These estimates immediately lead to \eqref{Eq_Exponential_estimate}.
\end{proof}

To verify the integral representation \eqref{Eq_Superoscillating_integral} of the function \eqref{Eq_Fdelta_sinc} in the proof of Corollary~\ref{cor_Fdelta} we need the following integral representation of the $\sinc$-function. A sketch of this proof is already given in \cite[Appendix~C]{B16}.

\begin{lem}\label{lem_Integral_representation_of_sinc}
For any $b>0$ one has
\begin{equation}\label{Eq_Integral_representation_of_sinc}
\sinc\big(\sqrt{z^2+b^2}\big)=\frac{1}{2}\int_{-1}^1e^{ikz}J_0(b\sqrt{1-k^2})dk,\quad z\in\mathbb{C}.
\end{equation}
where $J_0$ is the Bessel function of order zero.
\end{lem}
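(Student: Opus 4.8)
The plan is to verify the identity \eqref{Eq_Integral_representation_of_sinc} by comparing the power series of both sides in the variable $z$. Since $\sinc(w)=\sum_{n=0}^\infty\frac{(-1)^n}{(2n+1)!}w^{2n}$ contains only even powers of $w$, the left-hand side $\sinc(\sqrt{z^2+b^2})$ is an entire function of $z$ (the square root ambiguity drops out), and similarly the right-hand side is entire in $z$ because $J_0(b\sqrt{1-k^2})$ is bounded and continuous on $[-1,1]$ and one may differentiate under the integral sign. Hence it suffices to show that the Taylor coefficients at $z=0$ agree, and by parity it suffices to compare the coefficients of $z^{2m}$ for all $m\geq 0$.

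First I would expand the left-hand side: $\sinc(\sqrt{z^2+b^2})=\sum_{n=0}^\infty\frac{(-1)^n}{(2n+1)!}(z^2+b^2)^n$, and then apply the binomial theorem to collect the coefficient of $z^{2m}$, obtaining $\sum_{n\geq m}\frac{(-1)^n}{(2n+1)!}\binom{n}{m}b^{2(n-m)}$. For the right-hand side I would expand $e^{ikz}=\sum_{m=0}^\infty\frac{(ik)^m}{m!}z^m$ and integrate term by term (justified by uniform convergence on $[-1,1]$ together with the boundedness of $J_0(b\sqrt{1-k^2})$), so that the coefficient of $z^{2m}$ becomes $\frac{(-1)^m}{(2m)!}\cdot\frac12\int_{-1}^1 k^{2m}J_0(b\sqrt{1-k^2})\,dk$ (the odd powers integrate to zero by symmetry). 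Matching the two expressions reduces the whole lemma to the single scalar identity
\begin{equation*}
\frac12\int_{-1}^1 k^{2m}J_0\big(b\sqrt{1-k^2}\big)\,dk=\frac{(2m)!}{(-1)^m}\sum_{n\geq m}\frac{(-1)^n}{(2n+1)!}\binom{n}{m}b^{2(n-m)}.
\end{equation*}

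To evaluate the left-hand integral I would substitute $k=\cos\theta$ (so $\sqrt{1-k^2}=\sin\theta$) or, more directly, insert the series $J_0(x)=\sum_{j=0}^\infty\frac{(-1)^j}{(j!)^2}(x/2)^{2j}$ and integrate $\int_{-1}^1 k^{2m}(1-k^2)^j\,dk$ term by term; this last integral is an elementary Beta integral equal to $2\,B(m+\tfrac12,j+1)=\frac{2\,\Gamma(m+\frac12)\,j!}{\Gamma(m+j+\frac32)}$. Using the duplication formula $\Gamma(m+\tfrac12)=\frac{(2m)!}{4^m m!}\sqrt\pi$ and the analogous expression for $\Gamma(m+j+\tfrac32)=\Gamma((m+j+1)+\tfrac12)$, each term simplifies to a clean ratio of factorials, and after reindexing $n=m+j$ the resulting series matches the right-hand side above. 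The main obstacle is precisely this bookkeeping step: carefully tracking the Gamma-function factors through the duplication formula so that the factor $\frac{1}{(2n+1)!}\binom{n}{m}$ emerges, and ensuring the signs and the powers of $b$ line up after the shift of summation index. Once that scalar identity is in hand, the lemma follows by the coefficient comparison described above; alternatively, one may note that both sides of \eqref{Eq_Integral_representation_of_sinc} satisfy the same second-order ODE in $z$ (the Bessel-type equation coming from $w\mapsto\sinc(\sqrt{w})$) with the same initial data at $z=0$, which gives a shorter but essentially equivalent argument.
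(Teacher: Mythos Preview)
Your approach is correct and takes a genuinely different route from the paper. The paper proceeds via Fourier analysis: it computes the Fourier transform of $S(x)=\sinc(\sqrt{x^2+b^2})$ on $\mathbb{R}$, using a hyperbolic substitution together with the Mehler--Sonine representation of $J_0$ to identify $\mathcal{F}[S](k)$ with $\tfrac{\sqrt\pi}{\sqrt2}\,J_0(b\sqrt{1-k^2})$ for $|k|<1$, and a contour deformation in the lower half-plane to show $\mathcal{F}[S](k)=0$ for $|k|>1$; Fourier inversion then gives the identity on $\mathbb{R}$, and analytic continuation extends it to $\mathbb{C}$. Your route---Taylor-expand both sides in $z$, evaluate the moment integrals $\int_{-1}^1 k^{2m}(1-k^2)^j\,dk$ as Beta functions, and match coefficients via the duplication formula---is entirely elementary and avoids improper integrals, contour shifts, and the Mehler--Sonine identity. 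The paper's argument has the conceptual advantage that the band-limitedness (Fourier support in $[-1,1]$), which is the whole point of the lemma for the superoscillation application, emerges directly from the contour estimate; your argument is shorter once the Gamma bookkeeping is done and adapts readily to similar identities for other special functions. One small slip to fix when you write it up: $\int_{-1}^1 k^{2m}(1-k^2)^j\,dk = B(m+\tfrac12,\,j+1)$, not $2B(m+\tfrac12,\,j+1)$ (the factor $\tfrac12$ from the substitution $u=k^2$ cancels the doubling from even symmetry); with this correction the coefficient identity checks out exactly.
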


\begin{proof}
We start by deriving the so called Mehler-Sonine integral representation of the Bessel function
\begin{equation}\label{Eq_Mehler_Sonine_integral}
J_0(x)=\frac{2}{\pi}\lim\limits_{R\rightarrow\infty}\int_0^R\sin(x\cosh(t))dt,\quad x>0.
\end{equation}
Using the Cauchy theorem we can transform for every $R>0$ the integral
\begin{align*}
\int_0^Re^{ix\cosh(t)}dt&=i\int_0^{\frac{\pi}{2}}e^{ix\cosh(it)}dt+\int_0^Re^{ix\cosh(t+i\frac{\pi}{2})}dt-i\int_0^{\frac{\pi}{2}}e^{ix\cosh(R+it)}dt \\
&=i\int_0^{\frac{\pi}{2}}e^{ix\cos(t)}dt+\int_0^Re^{-x\sinh(t)}dt-i\int_0^{\frac{\pi}{2}}e^{ix\cosh(R)\cos(t)}e^{-x\sinh(R)\sin(t)}dt.
\end{align*}
Performing the limit $R\rightarrow\infty$, the last integral vanishes due to the dominated convergence theorem and we get
\begin{equation*}
\lim\limits_{R\rightarrow\infty}\int_0^Re^{ix\cosh(t)}dt=i\int_0^{\frac{\pi}{2}}e^{ix\cos(t)}dt+\int_0^\infty e^{-x\sinh(t)}dt.
\end{equation*}
Using the classical integral representation of the Bessel function \cite[Eq. (9.1.18)]{AS72} we get
\begin{align*}
J_0(x)&=\frac{1}{\pi}\int_0^\pi\cos(x\cos(t))dt=\frac{2}{\pi}\int_0^{\frac{\pi}{2}}\cos(x\cos(t))dt \\
&=\frac{2}{\pi}\Im\Big(\lim\limits_{R\rightarrow\infty}\int_0^Re^{ix\cosh(t)}dt\Big)=\frac{2}{\pi}\lim\limits_{R\rightarrow\infty}\int_0^R\sin\big(x\cosh(t)\big)dt,
\end{align*}
which is exactly the stated integral representation \eqref{Eq_Mehler_Sonine_integral}.

\medskip

For the main part of the proof let us consider the function
\begin{equation*}
S(z)\coloneqq\sinc\big(\sqrt{z^2+b^2}\big),\quad z\in\mathbb{C}.
\end{equation*}
Since the restriction $S|_\mathbb{R}$ is square integrable, its Fourier transform is given by the improper Riemann integral
\begin{equation*}
\mathcal{F}[S|_\mathbb{R}](k)=\frac{1}{\sqrt{2\pi}}\lim\limits_{R_1,R_2\rightarrow\infty}\int_{-R_1}^{R_2}e^{-ikx}S(x)dx=\frac{1}{\sqrt{2\pi}}\lim\limits_{R_1,R_2\rightarrow\infty}\int_{-R_1}^{R_2}\cos(kx)S(x)dx,\quad k\in\mathbb{R},
\end{equation*}
where the imaginary part of the integral vanishes in the limit due to the symmetry of $S$. Starting with $|k|<1$, we substitute $x=b\sinh(t)$, to write this integral as
\begin{equation}\label{Eq_Fourier_transform_integral}
\mathcal{F}[S|_\mathbb{R}](k)=\frac{b}{\sqrt{2\pi}}\lim\limits_{\widetilde{R}_1,\widetilde{R}_2\rightarrow\infty}\int_{-\widetilde{R}_1}^{\widetilde{R}_2}\cos\big(kb\sinh(t)\big)S\big(b\sinh(t)\big)\cosh(t)dt,
\end{equation}
where we used $\widetilde{R}_i\coloneqq\text{arsinh}(\frac{R_i}{b})$, $i=1,2$. Using the trigonometric identity $2\sin(u)\cos(v)=\sin(u+v)+\sin(u-v)$ we can write the integrand as
\begin{align*}
2b\cos\big(kb\sinh(t)\big)S\big(b\sinh(t)\big)\cosh(t)&=2\cos\big(kb\sinh(t)\big)\sin\big(b\cosh(t)\big) \\
&\hspace{-0.5cm}=\sin\big(b\cosh(t)+kb\sinh(t)\big)+\sin\big(b\cosh(t)-kb\sinh(t)\big).
\end{align*}
Since $|k|<1$ there exists some $t_0\in\mathbb{R}$ such that $e^{2t_0}=\frac{1+k}{1-k}$. Then, it is easy to verify that $\cosh(t_0)=\frac{1}{\sqrt{1-k^2}}$ and $\sinh(t_0)=\frac{k}{\sqrt{1-k^2}}$ and consequently
\begin{equation*}
\cosh(t)\pm k\sinh(t)=\sqrt{1-k^2}\,\cosh(t\pm t_0).
\end{equation*}
Using this identity we can further rewrite the integrand as
\begin{align*}
2b\cos\big(kb\sinh(t)\big)S\big(b\sinh(t)\big)\cosh(t)=&\sin\big(b\sqrt{1-k^2}\,\cosh(t+t_0)\big) \\
&+\sin\big(b\sqrt{1-k^2}\,\cosh(t-t_0)\big).
\end{align*}
Plugging this representation in the integral \eqref{Eq_Fourier_transform_integral} gives
\begin{align*}
\mathcal{F}[S|_\mathbb{R}](k)&=\frac{1}{2\sqrt{2\pi}}\lim\limits_{\widetilde{R}_1,\widetilde{R}_2\rightarrow\infty}\int_{-\widetilde{R}_1}^{\widetilde{R}_2}\Big(\sin\big(b\sqrt{1-k^2}\,\cosh(t+t_0)\big) \\
&\hspace{5cm}+\sin\big(b\sqrt{1-k^2}\,\cosh(t-t_0)\big)\Big)dt \\
&=\frac{1}{\sqrt{2\pi}}\lim\limits_{\widehat{R}_1,\widehat{R}_2\rightarrow\infty}\int_{-\widehat{R}_1}^{\widehat{R}_2}\sin\big(b\sqrt{1-k^2}\,\cosh(t)\big)dt \\
&=\frac{\sqrt{2}}{\sqrt{\pi}}\lim\limits_{\widehat{R}_2\rightarrow\infty}\int_0^{\widehat{R}_2}\sin\big(b\sqrt{1-k^2}\,\cosh(t)\big)dt \\
&=\frac{\sqrt{\pi}}{\sqrt{2}}J_0\big(b\sqrt{1-k^2}\big),
\end{align*}
where in the second line we substituted $t\rightarrow t\mp t_0$, used $\widehat{R}_1\coloneqq\widetilde{R}_1\mp t_0$, $\widehat{R}_2\coloneqq\widetilde{R}_2\pm t_0$, in the two respective integrals and added them together. Moreover, in the last equation we used the representation \eqref{Eq_Mehler_Sonine_integral}.

\medskip

For $k>1$ we use the Cauchy theorem to change the integration path to a semicircle in the lower half space
\begin{equation}\label{Eq_Fourier_F}
\begin{split}
\mathcal{F}[S|_\mathbb{R}](k)&=\frac{1}{\sqrt{2\pi}}\lim\limits_{R_1,R_2\rightarrow\infty}\int_{-R_1}^{R_2}e^{-ikx}S(x)dx\\
&=\frac{1}{\sqrt{2\pi}}\lim\limits_{R\rightarrow\infty}\int_{-R}^{R}e^{-ikx}S(x)dx\\
&=\frac{i}{\sqrt{2\pi}}\lim\limits_{R\rightarrow\infty}\int_\pi^{2\pi}R\,e^{-ikRe^{i\varphi}}S(Re^{i\varphi})e^{i\varphi}d\varphi.
\end{split}
\end{equation}
Choosing $R>b$ and using that
\begin{equation}\label{Eq_Radius_estimate}
R^2-b^2\leq|R^2e^{2i\varphi}+b^2|\leq R^2+b^2,\quad\varphi\in[\pi,2\pi],
\end{equation}
the function $S$ in the integrand can be estimated by
\begin{equation}\label{Eq_S_estimate}
|S(Re^{i\varphi})|=\bigg|\frac{\sin(\sqrt{R^2e^{2i\varphi}+b^2})}{\sqrt{R^2e^{2i\varphi}+b^2}}\bigg|\leq\frac{e^{|\Im(\sqrt{R^2e^{2i\varphi}+b^2})|}}{\sqrt{R^2-b^2}},\quad\varphi\in[\pi,2\pi].
\end{equation}
Writing any $w\in\mathbb{C}$ as $w=|w|e^{i\Arg(w)}$ with $\Arg(w)\in[0,2\pi)$, our choice \eqref{Eq_Root} of the complex square root shows $\sqrt{w}=\sqrt{|w|}\,e^{i\frac{\Arg(w)}{2}}$. Using this, we  rewrite the imaginary part
\begin{equation*}
\Im(\sqrt{w})=\sqrt{|w|}\,\sin\Big(\frac{\Arg(w)}{2}\Big)=\frac{\sqrt{|w|}}{\sqrt{2}}\sqrt{1-\cos(\Arg(w))}=\frac{1}{\sqrt{2}}\sqrt{|w|-\Re(w)},\quad w\in\mathbb{C}.
\end{equation*}
This representation and the estimate \eqref{Eq_Radius_estimate} can now be used to estimate the exponent in \eqref{Eq_S_estimate} by
\begin{equation*}
2\big|\Im\big(\sqrt{R^2e^{2i\varphi}+b^2}\big)\big|^2=|R^2e^{2i\varphi}+b^2|-\Re(R^2e^{2i\varphi}+b^2)\leq R^2-R^2\cos(2\varphi)=2R^2\sin^2(\varphi).
\end{equation*}
Hence we can further estimate \eqref{Eq_S_estimate} by
\begin{equation*}
|S(Re^{i\varphi})|\leq\frac{e^{-R\sin(\varphi)}}{\sqrt{R^2-b^2}},\quad\varphi\in[\pi,2\pi].
\end{equation*}
Using this inequality, the Fourier transform \eqref{Eq_Fourier_F} can be estimated by
\begin{align*}
|\mathcal{F}[S|_\mathbb{R}](k)|&\leq\frac{1}{\sqrt{2\pi}}\lim\limits_{R\rightarrow\infty}\int_\pi^{2\pi}\frac{R}{\sqrt{R^2-b^2}}e^{(k-1)R\sin(\varphi)}d\varphi \\
&=\frac{1}{\sqrt{2\pi}}\int_\pi^{2\pi}\lim\limits_{R\rightarrow\infty}\frac{R}{\sqrt{R^2-b^2}}e^{(k-1)R\sin(\varphi)}d\varphi\\
&=0.
\end{align*}
For $k<-1$ we also conclude $\mathcal{F}[S|_\mathbb{R}](k)=0$ by the symmetry of $S$.

\medskip

Summing up, we have now shown that
\begin{equation}\label{Eq_Fourier_S}
\mathcal{F}[S|_\mathbb{R}](k)=\begin{cases} \frac{\sqrt{\pi}}{\sqrt{2}}J_0(b\sqrt{1-k^2}), & \text{if }|k|<1, \\ 0, & \text{if }|k|>1. \end{cases}
\end{equation}
Applying the inverse Fourier transform to \eqref{Eq_Fourier_S} we obtain
\begin{equation*}
S(x)=\frac{1}{2}\int_{-1}^1e^{ikx}J_0(b\sqrt{1-k^2})dk,\quad x\in\mathbb{R}.
\end{equation*}
Since $J_0$ is a bounded function, the right hand side extends to an entire function when $x\in\mathbb{R}$ is replaced by $z\in\mathbb{C}$. Since the holomorphic extension is unique it coincides with $S$ and we conclude the stated integral representation \eqref{Eq_Integral_representation_of_sinc}.
\end{proof}

\end{document}